\newif\ifanonymous
\newif\ifdraft
\colorlet{rose}{red!20}
\colorlet{lima}{yellow!30}
\colorlet{light}{black!10}
\definecolor{gray}{RGB}{102,102,102}		
\definecolor{lightblue}{RGB}{0,102,153}		
\definecolor{lightgreen}{RGB}{102,153,0}	
\definecolor{bluegreen}{RGB}{51,153,126}	
\definecolor{magenta}{RGB}{217,74,122}		
\definecolor{orange}{RGB}{226,102,26}		
\definecolor{purple}{RGB}{125,71,147}		
\definecolor{green}{RGB}{113,138,98}		
\lstdefinelanguage{Proverif}{
  deletekeywords ={do},
  keywords = [1]{1,2,3,4,5,6,7,8,9,0},
  keywords = [2]{provider, dom, port, ip, as, bitstring, service, channel, regis, chan, com},
 keywords = [3]{smtp_client, smtp_server, resolver,root_server, dns_server, new, free, type, reduc, set, event, query, fun, forall, let, lookup, insert, in ,out, get, then, else, ;, ., |, process},
  keywordstyle = [1]\color{bluegreen},
  keywordstyle = [2]\color{lightgreen},
  keywordstyle = [3]\color{magenta},
  sensitive = true,
  commentstyle= \color{lightgray},
  morecomment = [l]{//},
  morecomment = [n][\color{orange}]{(*}{*)},
}
\pgfplotsset{compat=1.16} 
\DeclareRobustCommand\bigop[1]{%
  \mathop{\vphantom{\sum}\mathpalette\bigop@{#1}}\slimits@
}
\newcommand{\bigop@}[2]{%
  \vcenter{%
    \sbox\z@{$#1\sum$}%
    \hbox{\resizebox{\ifx#1\displaystyle.9\fi\dimexpr\ht\z@+\dp\z@}{!}{$\m@th#2$}}%
  }%
}
\newcommand{\bigPar}{\DOTSB\bigop{\parallel}}
\tikzset{
	basic/.style       = {draw, drop shadow, font=\small\ttfamily, rectangle},
	dom/.style         = {basic, rounded corners=2pt, thick, align=center, fill=white, draw=blue},
	ip/.style          = {basic, rounded corners=2pt, thick, align=center, fill=white, draw=green},
	org/.style         = {basic, rounded corners=2pt, thick, align=center, fill=white, draw=red},
	cc/.style          = {basic, rounded corners=2pt, thick, align=center, fill=white, draw=purple},
	as/.style          = {basic, rounded corners=2pt, thick, align=center, fill=white, draw=orange},
	label/.style       = {font=\small\sffamily, text centered},
	ex1/.style         = {basic, rounded corners=2pt, thick, align=center, fill=white, draw=black},
	prop/.style         = {basic, dashed, fill=white, draw=black}
}
\newcommand{\mathcmd}[1]{{\normalfont\ensuremath{#1}}\xspace}
\newcommand{\mathvar}[1]{\mathcmd{\mathsf{#1}}}
\newcommand{\mathname}[1]{\mathcmd{\text{\textrm{#1}}}}
\newcommand{\mathfun}[1]{\mathcmd{\mathit{#1}}}
\newcommand{\mathvalue}[1]{\mathcmd{\mathit{#1}}}
\newcommand{\mathlabel}[1]{\mathcmd{\textsf{#1}}}
\newcommand{\mathset}[1]{\mathname{#1}}
\newcommand{\textop}[1]{\relax\ifmmode\mathop{\text{#1}}\else\text{#1}\fi}
\newcommand{\PN}{\ensuremath{\mathit{PN}}\xspace}
\newcommand{\FN}{\ensuremath{\mathit{FN}}\xspace}
\newcommand{\Sign}{\ensuremath{\Sigma}\xspace}
\newcommand{\pout}{\ensuremath{\mathsf{out}}\xspace}
\newcommand{\pin}{\ensuremath{\mathsf{in}}\xspace}
\newcommand{\pif}{\ensuremath{\mathsf{if}}\xspace}
\newcommand{\pthen}{\ensuremath{\mathsf{then}}\xspace}
\newcommand{\pelse}{\ensuremath{\mathsf{else}}\xspace}
\newcommand{\plet}{\ensuremath{\mathsf{let}}\xspace}
\newcommand{\pevent}{\ensuremath{\mathsf{event}}\xspace}
\newcommand{\traces}{\ensuremath{\mathit{traces}}\xspace}
\newcommand{\theactualrule}[1]{\text{Please redefine the command
theactualrule.}}
\newcommand{\underscorethingy}[1]{\text{Please redefine the command
underscorethingy.}}
\newcommand{\senc}{\mathlabel{senc}}
\newcommand{\sdec}{\mathlabel{sdec}}
\DeclareRobustCommand{\defeq}{\mathrel{\rlap{%
  \raisebox{0.3ex}{$\m@th\cdot$}}%
  \raisebox{-0.3ex}{$\m@th\cdot$}}%
  =}
\DeclareRobustCommand{\eqdef}{=\mathrel{\rlap{%
  \raisebox{0.3ex}{$\m@th\cdot$}}%
  \raisebox{-0.3ex}{$\m@th\cdot$}}%
  }
\newtheorem{definition}{\textbf{Definition}}
\newtheorem{lemma}{\textbf{Lemma}}
\newtheorem{theorem}{\textbf{Theorem}}
\newtheorem{example}{Example}
\newtheorem{soundassumption}{CS}
\newtheorem{completeassumption}{CC}
\theoremstyle{remark}
\newtheorem*{discussion}{\textit{Discussion}}
\newcommand{\ledot}{\mathrel{\ooalign{\hss\raise.200ex\hbox{$\cdot$}\hss\cr$\le$}}}
\newcommand{\gedot}{\mathrel{\ooalign{\hss\raise.200ex\hbox{$\cdot$}\hss\cr$\ge$}}}
\newcommand\abs[1]{\left\lvert#1\right\rvert}
\newcommand{\calA}{\ensuremath{\mathcal{A}}\xspace}
\newcommand{\calF}{\ensuremath{\mathcal{F}}\xspace}
\newcommand{\calG}{\ensuremath{\mathcal{G}}\xspace}
\newcommand{\calI}{\ensuremath{\mathcal{I}}\xspace}
\newcommand\calP{\ensuremath{\mathcal{P}}\xspace}
\newtheoremstyle{myrule} 
  {\topsep}
  {\topsep}
  {}
  {0pt}
  {\bfseries}
  {. }
  { }
  {\thmname{#1}\thmnumber{ #2} ---\thmnote{#3}}
\theoremstyle{myrule}
\newtheorem{myrule}{Rule}[section]
\newcommand{\set}[1]{\{#1\}}
\newcommand{\fullversionref}{symsound-full}
\newcommand{\appendixorfull}[1]{%
\processifversion{conf}{the extended version~\cite[Appendix~\ref{F-#1}]{\fullversionref}}%
\processifversion{full}{Appendix~\ref{#1}}%
}
\newcommand{\countr}{\mathset{Cntry}}
\newcommand{\domain}{\mathset{Dom}}
\newcommand{\addr}{\mathset{IP}}
\newcommand{\AS}{\mathset{AS}}
\newcommand{\ASv}{\mathvalue{AS}}
\newcommand{\client}{\text{c}}
\newcommand{\server}{\text{s}}
\newcommand{\loc}{\mathlabel{LOC}}
\newcommand{\RTE}[1]{\mathlabel{RTE}(#1)}
\newcommand{\orig}{\mathlabel{ORIG}}
\newcommand{\A}{\mathlabel{A}}
\newcommand{\MX}{\mathlabel{MX}}
\newcommand{\NS}{\mathlabel{NS}}
\newcommand{\RES}{\mathlabel{RES}} 
\newcommand{\DNS}{\mathlabel{DNS}} 
\newcommand{\rns}{\mathlabel{RNS}}
\newcommand{\vpn}{\mathfun{\mathsf{nVPN}}}
\newcommand{\compr}{\mathfun{\mathsf C}}
\newcommand{\unconf}{\ensuremath{\mathsf{unconf}}}
\newcommand{\tls}{\ensuremath{\mathsf{nTLS}}}
\newcommand{\dnssec}{\ensuremath{\mathsf{nDNSSEC}}}
\newcommand{\dane}{\ensuremath{\mathsf{nDANE}}}
\newcommand{\strictValidation}{\ensuremath{\mathsf{nRFC7817}}}
\newcommand{\intrh}{\ensuremath{\mathsf{I^{DNS}}}}
\newcommand{\intrrd}{\ensuremath{\mathsf{I^R}}}
\newcommand{\intrd}{\ensuremath{\mathsf{I^{DNS}}}} 
\newcommand{\provider}{\ensuremath{\mathsf{Provider}}}
\newcommand{\prov}{\mathvar{Provider}}
\newcommand{\receiver}{\ensuremath{\mathsf{rcv}}}
\newcommand{\sender}{\ensuremath{\mathsf{snd}}}
\newcommand{\node}[1]{\mathvalue{#1}}
\newcommand{\ip}{\node{ip}}
\newcommand{\as}{\node{as}}
\newcommand{\Eps}{\mathcal{E}}
\newcommand{\Process}{\mathcal{P}}
\newcommand{\Frame}{\delta}
\newcommand{\frameabr}{\nu \Eps.\delta}
\newcommand{\substi}[2]{\left\{^{#1}/_{#2}\right\}}
\newcommand{\mulcup}{\cup^{\#}}
\newcommand{\vecv}{\overrightarrow{v}}
\newcommand{\vecin}[1]{\overrightarrow{#1}}
\newcommand{\Events}{\mathvalue{Events}}
\newcommand{\EventSig}{\Sign_\mathvalue{Event}}
\newcommand{\PT}{\mathcmd{\Pi}} 
\newcommand{\Ptraces}{\mathcal{T}^\PT}
\newcommand{\pt}{\mathvalue{pt}}
\newcommand{\traceset}{\mathcal{T}}
\newcommand{\setfun}[1]{\mathfun{set}(#1)}
\newcommand{\tracewo}[2]{#1\hspace{-0.26667em}\mid_{#2}}
\newcommand{\Inter}{\Sigma_{\cap}}
\newcommand{\pre}{\mathvalue{pre}}
\newcommand{\post}{\mathvalue{post}}
\newcommand{\postsec}{\mathfun{postseq}(\mathvalue{\pi})}
\newcommand{\perm}[1]{\mathfun{perm}(#1)}
\newcommand{\smtpserver}{\mathname{smtp-server}}
\newcommand{\smtpclient}{\mathname{smtp-client}}
\newcommand{\dnsres}{\mathname{res}}
\newcommand{\dnsns}{\mathname{ns}}
\newcommand{\dnsrns}{\mathname{rns}}
\newcommand{\reqchannel}{\mathname{req\_packet}}
\newcommand{\getreqpacket}{\mathname{get\_req\_packet}}
\newcommand{\anschannel}{\mathname{ans\_packet}}
\newcommand{\priv}{\mathsf{f_\mathname{priv}}}
\newcommand{\newstuff}[1]{\textcolor{orange}{\textbf{#1}}}
\newcommand{\rem}[1]{\newstuff{\st{#1}}}
\newcommand{\revised}[1]{\textcolor{black}{#1}}
\begin{document}

\ifanonymous \else
\author{%
\IEEEauthorblockN{%
Alexander Dax and Robert K\"{u}nnemann
}
\IEEEauthorblockA{CISPA Helmholtz Center for Information Security\\
Saarland Informatics Campus}
}
\fi
\title{%
    On the Soundness of Infrastructure Adversaries
}

\maketitle

\begin{abstract}
Companies and network operators perform risk assessment to
inform policy-making,
guide infrastructure investments
or to comply with security standards such as ISO 27001.
Due to the size and complexity of these networks, risk assessment
techniques such as attack graphs or trees describe the attacker with a finite set of rules.
This characterization of the attacker can easily miss attack vectors
or overstate them, potentially leading to incorrect risk estimation.

In this work, we propose the first methodology to justify a rule-based
attacker model.
Conceptually, we add another layer of abstraction on top of the
symbolic model of cryptography, which reasons about
protocols and abstracts cryptographic primitives.
This new layer reasons about Internet-scale networks and
abstracts protocols.

We show, in general, how the soundness and completeness of
a rule-based model can be ensured by verifying trace properties,
linking soundness to safety properties and completeness to liveness
properties.
We then demonstrate the approach
for a recently proposed threat model that quantifies the
confidentiality of email communication on the Internet, including DNS, DNSSEC, and SMTP. 
Using off-the-shelf protocol verification tools, we discover two 
flaws in their threat model. After fixing them, we show that it
provides symbolic soundness.
\end{abstract}

\section{Introduction}

The Internet is the primary medium for distributing entertainment,
news and knowledge and an important 
pillar to industrial commerce.
It is constructed from service providers interoperating according to
several protocols. Many of them were conceived before the Internet
was even considered a Mass Medium~\cite{10.1111/j.1083-6101.1996.tb00174.x};
they were hence designed to be fast and service-oriented, whereas
security was a second thought. Trust between service providers at
different protocol layers is thus an implicit assumption, making it
difficult to estimate potential attacks' impact. 

High-profile attacks, e.g. on routing~\cite{Marczak:GreatCannon} or
name resolution~\cite{myetherwalletMESSAGEOURCOMMUNITY2018} are a painful reminder of these trust assumptions.
They also highlight the slow adoption
of security protocols, which were developed only post-hoc,
to mitigate some of these issues.
Even for TLS~\cite{rescorla2018transport}, which enjoys high
popularity, adoption was and remains slow.
According to Qualys Labs~\cite{Pulse}, 6\% of all websites still support
SSL 3.0, which is exploitable in various manners and was deprecated in
2015.
Moreover, security protocols rely on trust assumptions and
a complicated interplay between routing, name resolution, and the
application layer.
An example is RFC 7817~\cite{rfc7817}, which defines certificate validation for
email transport. It mandates that the certificate contains the email
domain (the part after the '@') and not just the target server's domain name, as a name resolution attacker can easily manipulate the latter.
Large-scale attacks thus rarely
exploit previously unknown flaws in a single protocol, but instead
target their deployment in the wild. 

Despite the effort put into securing individual protocols and
cryptographic primitives in the past decades,
worldwide attacks like the
Great Cannon~\cite{marczak2015china} or spying systems like
PRISM~\cite{greenwald2013nsa} exploit weak components and (the
absence of) trust anchors in the infrastructure.
To analyze an infrastructure like the Internet, with broken legacy
protocols, unstable trust assumptions, and varying degrees of
centralization on different layers,
a high-level approach is necessary.

\textit{Risk assessment} originates in the formal assessment of
potential failures in large infrastructures like power plants.
Techniques like fault trees provide a systematic method for
identifying and minimizing potential risks. They were soon adopted for
IT infrastructure.
These techniques usually consider the severity of known vulnerabilities and
some valuation of critical assets.
The problem size grows with the size of the network. 
Therefore, most of these techniques formalize the threat model as
a set of rules.
Those techniques include planning, attack graphs (which were derived from fault graphs), and game-based models.\footnote{For other techniques, consider the study by \citet{wang2007study}.}

While these analyses are formal and well justified, the rules
themselves are not formally justified. It is not safe to assume that
the set of rules is comprehensive. Thus the analysis may
miss potential attack vectors.
There is a surprising similarity to the soundness of the Dolev-Yao
model.
Abadi and Rogaway's seminal paper on computational
soundness~\cite{abadi_reconciling_2002}
considered the soundness of a such a rule-based symbolic attacker on
protocols in the computational model. 
Likewise,
our focus is on the soundness of
a rule-based attacker, the infrastructure attacker (IA), but in the
symbolic model instead of the computational model.
In both
cases, the need for further abstraction is driven by the complexity of
the problem (infrastructure analysis/protocol analysis) but
requires justification.



\subsection*{Contributions}

\begin{enumerate}
    \item We define proof obligations for 
        the correctness of an infrastructure attacker in the STRIPS
        framework for planning as a set of trace properties.
        We show that soundness can be proven by verifying  
        \emph{safety properties}, and correctness by
        verifying \emph{liveness properties}.
    \item We apply this definition to an IA model for email
        communication~\cite{speicher2018formally} and establish its
        soundness (barring some minor flaws).
    \item We show how to automate the proof of this trace properties
        by over-approximating all possible instantiations of the
        IA model with a single process.
        The protocol transformations we introduce to this end are of
        independent interest, as they can help to reduce drastically the size of processes that model an adversary with limited access to network traffic. 
    \item We show various authentication properties of SMTP in
        conjunction with DNS, DNSSEC, and a simple resolver model in
        ProVerif.
        As a by-product, this model provides the first automated
        verification result for authentication in DNSSEC.

\end{enumerate}


\section{Related Work}
\label{sect:relwork}

\subsubsection{Risk assessment techniques}

The most popular techniques for the analysis of IT infrastructure are
attack graphs \cite{zeng2019survey} and trees\cite{schneier_1999}, see
\cite{mantel2019meaning} for a recent survey.
They originate in 
risk assessment and reliability analysis for critical infrastructures.
Fault tree analysis~\cite{clifton1999fault}
was used,
e.g. for analyzing nuclear power plants or
military missile control systems.
Attack graphs and trees have been used to assess risks in
forensic examination~\cite{liu2012using},
network security~\cite{kotenko2006attack,phillips1998graph} 
or cloud infrastructures~\cite{Nawaf}.
Used naïvely, both techniques suffer from the state explosion
property. Luckily, a large body of work is devoted to improving
performance, e.g. 
generating of minimal attack graphs~\cite{ghosh2009intelligent},
distributing attack graph generation~\cite{kaynar2015distributed}
or the efficient representation of network defenses~\cite{ingols2006practical}.

More recently, planning was considered as an alternative
technique with great benefits in terms of
performance~\cite{ghoshPlannerbasedApproachGenerate2012}.
Planning is one of the oldest sub-areas of AI
and benefits from being a well-studied research field with a large
community and a focus on optimizing performance.
Compared to various semantics for attack graphs, there is a fairly wide
agreement on the STRIPS framework~\cite{strips}.
Planning was used for attack graph generation \cite{hendler1990ai},
network analysis \cite{boddy2005course},
penetration testing~\cite{obes2013attack},
and internet
infrastructure analysis \cite{speicher2018stackelberg}. Additionally,
the popular attack graph formalism can be translated into a planning
problem \cite{hoffmann2015simulated}.

\subsubsection{Infrastructure analysis}

Until recently, these approaches were used to analyze local networks
or the public infrastructure unrelated to information security.
Presumably, this was due to the problem size associated with
large-scale infrastructures like the Internet.
\citet{frey2016bends} conducted one of the first Internet-scale
infrastructure assessments in terms of security evaluation. They
investigated the Border Gateway Protocol deployment looking into
potential threats and vulnerabilities.
\citet{cispa1091} present a technique that models services, providers,
and dependencies on the Internet as a property graph, establishing
a high-level IA model. This model is used to reason about
dependencies between services and infrastructure providers
and how these dependencies can be exploited to impact a large amount
of end users.
They
conduct a large-scale case study by using a simple tainting-style
propagation technique in a graph database highly optimized for
reachability queries. 
They studied several
attack scenarios like email-sniffing and DDOS caused by the
distribution of malicious JavaScript.
More recently, \citet{speicher2018formally} introduced the first
deployment analysis on a global level, evaluating
various measures to secure the email infrastructure against
large-scale attacks.
They employ Stackelberg planning~\cite{speicher2018stackelberg}, which
is a two-stage planning technique that computes all
defender plans that are Pareto-optimal with respect to their cost
and the worst-case impact of an attacker.

To our knowledge, all these assessment approaches have only informally
justified their threat models. Given the high abstraction level of
their reasoning, validation using formal analysis techniques is
necessary.
\citet{sheyner_automated_2002} propose the use of symbolic model
checking to generate attack graphs from a finite state machine that
represents the network. Here, state transitions correspond to atomic
attacker steps which themselves require justification. 
This approach is neither applicable to larger networks (because of the
aforementioned state-explosion property) nor does it provide the
desired level of justification (as network attackers are too complex for
finite state machines).

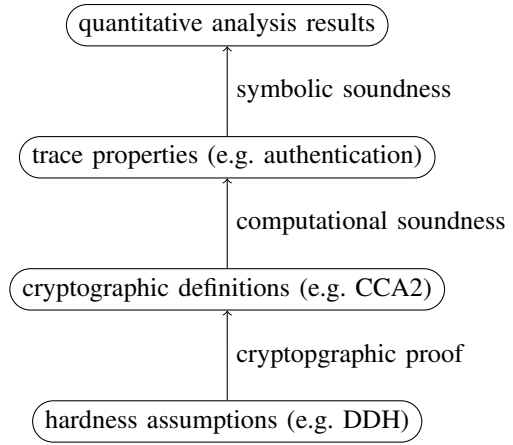
\begin{figure}
\centering
\begin{tikzpicture}[
    node distance = 5em
  , mbox/.style = { draw,rounded rectangle}
    ]

    \node (ass) [mbox]  {hardness assumptions (e.g. DDH)};
    \node (cr) [mbox, above of=ass]  {cryptographic definitions (e.g. CCA2)};
    \node (tr) [mbox, above of=cr]  {trace properties (e.g.
        authentication)};
    \node (q) [mbox, above of=tr]  {quantitative analysis results};

    \path[->]
        (ass) edge node[right] {cryptopgraphic proof} (cr)
        (cr) edge node[right] {computational soundness} (tr)
        (tr) edge node[right] {symbolic soundness} (q)
        ;
    
\end{tikzpicture}

\caption{Relation between different levels of abstraction}
\label{fig:relation-between-different-levels-of-abstraction}
\end{figure} 

\subsubsection{The analogy to computational soundness}
\label{sec:analogy}

In contrast to cryptographic primitives like encryption or signatures,
larger cryptographic protocols are typically analyzed in the Dolev-Yao
model~\cite{dolev1983security},
where cryptographic primitives (short: crypto primitives)
are abstracted with a term
algebra.
Proofs in the computational model are possible, but become
prohibitively complex due to the need to reason about probabilistic
behavior and runtime. Mechanization becomes incredible
difficult~\cite{Barthe2011Computer-Aided-} and manual proofs can easily
miss details.
By contrast, the abstraction of cryptography by a term algebra has
enabled the development of fully automatic and semi-automatic protocol
verifiers~\cite{blanchet_efficient_2001,hutchison_tamarin_2013}
that can handle highly complex protocols, e.g.
TLS~1.3~\cite{bhargavan2017verified,cremers_comprehensive_2017}.

To bridge both worlds, \citet{abadi_reconciling_2002} introduced
\textit{computational soundness}, justifying the use of the symbolic
model for protocol analysis. Gaining the advantages of automation in
the symbolic model, and the stronger guarantees of the computational
model, the notion of computational soundness is seen as a massive
milestone in protocol verification.  With this work, we want to extend
on this stack in an analogous fashion, and introduce the notion of
\textit{symbolic soundness} relating the infrastructure adversary
to the symbolic model in a similar fashion. This is depicted in
Fig.~\ref{fig:relation-between-different-levels-of-abstraction}.

\begin{table}
\centering
\small
\begin{tabular}{p{20mm}p{29mm}p{26mm}}
\toprule
& computational sound.
& symbolic soundness
\\
\midrule
threat model 
& network attacker 
& infrastructure att. 
\\
assumption 
& perfect cryptography 
& perfect protocol
\\
high-level semantics
& term algebra + process calculus
& planning (STRIPS)
\\
low-level semantics
& probabilistic Turing Machines
& term algebra + process calculus
\\\midrule
\emph{Proof strategy}
\\
Fix a set of
& crypto primitives
& protocols
\\
conforming to 
& crypto definitions (e.g. CCA2 for Cramer-Shoup).
& trace properties (e.g. authenticity for TLS).
\\
For all 
& protocols 
& network topologies
\\ 
map each
& computational traces
& protocol trace
\\
from
& comp. executions (interpreted by TM)
& processes (compiled from the topology)
\\
to a
& (symb.) protocol trace.
& plan.
\\
\bottomrule
\end{tabular}
\caption{Comparing computational soundness and symbolic soundness.}
\label{tab:symcompsound}
\end{table}

To attain this goal, we lift this approach from the protocol level to
the infrastructure level. To readers familiar with computational
soundness, Tab.~\ref{tab:symcompsound} can help to support this
analogy.

We formalize the IA model as a planning problem in
the STRIPS formalism~\cite{strips}. The rules represent an infrastructure
attacker that can selectively corrupt parts of the infrastructure, but
assumes that protocols themselves are secure. 
Dolev-Yao models, by contrast, represent a network attacker, but
assume the crypto primitives to be perfect.
To reason about the validity of the model w.r.t.\ those assumptions,
they need to be formalized. As they are implicit in the respective
semantics (STRIPS / process calculus with term algebra), a low-level
semantics is necessary to state these assumptions. Assumptions about
protocols are stated in the Dolev-Yao model, usually within
a process calculus with a term algebra. Assumptions about
cryptographic primitives are stated as asymptotic probability bounds on the
probability of a runtime-bounded Turing machines winning some game.

Symbolic soundness
asserts that, when compiling a given infrastructure model into
a process, all symbolic traces that this process allows can be mapped to
attacker plans in the planning model. This is \revised{structurally} similar to computational
soundness, which ensures that no computational attacks are missed by
mapping all computational executions to symbolic traces (with
a negligible failure probability). We can thus show that no
\revised{symbolic} attack is
missed by the IA model, provided the protocols are working as
intended.
\revised{To be clear: symbolic soundness does not imply computational
    soundness, but both combine. If a symbolic soundness
    result asserts the absence of attacks w.r.t.\ to some symbolic
    model, then a computational soundness can extend this result to the
computational model of cryptography. This requires that the
computational soundness result supports the
cryptographic primitives and process calculus used by the symbolic
soundness result.}

Much like results in computational soundness, symbolic soundness
results apply to a fixed set of primitives, in their case: protocols.
We model an infrastructure 
consisting of 
DNS, DNSSEC and SMTP, 
using a dialect of
the applied-$\pi$ calculus~\cite{abadi2001mobile}.

\subsubsection{Symbolic completeness and liveness properties}

In contrast to computational soundness,
the completeness of an IA model is equally important to the analysis.
A quantitative analysis, e.g. counting the number of affected hosts,
is incorrect if the IA model overestimates the protocol attacker's
capabilities.
In the case of Stackelberg planning, this might lead to the
proposition of suboptimal countermeasures and, if the defender budget is
fixed, to an allocation that is not optimal for security.
We therefore define \textit{symbolic completeness} and show a set of
conditions that implies the completeness of this model.
Unfortunately, one of these is a liveness property,
i.e., a property of
the form: `the [protocol] eventually enters a desirable
state'~\cite{lamport1977proving}.
Practically all 
protocol verification tools~\cite{blanchet_efficient_2001,hutchison_tamarin_2013,cremers2008scyther}
in the unbounded model
cover only safety
properties, i.e., properties of the form `the protocol never enters
a bad state'.
Hence, there is currently no support for the verification of liveness
properties such as ours.
We further elaborate on this topic in
\cref{sec:defsymcomp}.

\subsubsection{Analysis of DNSSEC}

To our knowledge, our case study provides not only the first automated result w.r.t.\ an infrastructure attacker, but
also the first automated verification result for DNSSEC\@.
\citet{chetiouiFormalVerificationConfidentiality2019}
investigate (weak) secrecy in E-DNSSEC, a variant of DNSSEC that
adds encryption, in ProVerif.
\citet{kammullerVerificationDNSsecDelegation2014}
also cover authentication in a handwritten, but automatically verified
proof in Isabelle/HOL\@.

\section{Background: Automated Planning}
\label{sec:back}

A planning task is usually described in the STRIPS framework~\cite{strips}. Here,
$\PT = (\calP, \calI, \calA, \calG)$ 
is defined over a high-level
representation of the world in which each state $\sigma$ is built over
a set of \textit{state propositions} \calP.
$\calI \subseteq \calP$ is the initial state
and the task is to
reach a \textit{goal states} in $\calG \subseteq 2^{\mathcal{P}}$.
A set of \textit{ actions} \calA over \calP defines transitions between states. 
Actions are described as a triple (\textit{pre, del, add}) where
\textit{pre} $\subseteq$ \calP  is the set of preconditions needed in
the current state to make the action  applicable, \textit{del}
$\subseteq$ \calP tells which proposition will be  deleted in the
transition to the next state whereas \textit{add} $\subseteq$  \calP
tells which propositions are added. In  \textit{classical planning},
we assume that all actions have a  deterministic effect and that the
initial state of the world is known from
beginning.
A state $\sigma$ is \textit{reachable} from \calI, if there is a sequence of actions $a_0 \cdots a_k$, which can be applied to \calI one after another resulting in $\sigma$. We call this sequence of actions a plan $\pi$ to reach $\sigma$.
The basic idea behind planning is to find a sequence of actions, s.t. their application starting from the initial state \calI leads to one of the goal states in \calG.
\revised{%
\citeauthor{speicher2018formally}, e.g., consider the initial state as
the nodes that an attacker controls from the start, e.g., different
nation-state adversaries or companies abusing power. Goal states are
valuable assets that need to be protected, e.g., the largest mail
providers within some country.
}
Over the years, several variations of
automated planning have been developed,
with different 
 modeling assumption and resulting complexity classes for plan existence, worst-case runtime, etc.

We focus on classical planning for the ease of presentation.
Our approach easily transfers to probabilistic planning when considering uncertainty about the initial set-up or effect probabilities
as model parameters. We cannot justify these parameters via protocol
verification (which is typically possibilistic) or cryptographic
reasoning in general. These parameters model uncertainty
about the attacker's capabilities and intentions. They are thus
outside the current scope of formal analysis in security.
Our infrastructure attacker is described by actions that have only
positive preconditions and postconditions\revised{, i.e., they are described as
    pairs $(\pre,\post)\in \calP^2$ instead of tuples
$(\pre,\mathit{del},\mathit{add})$.}
Such planning tasks are
called \textit{delete-relaxed} or \textit{monotonous} and are easier
to solve. Delete-relaxed planning aligns with the implicit assumption
that attackers only gain assets in attack graph
analysis~\cite{ammann2002scalable}.

Stackelberg planning \cite{speicher2018stackelberg} elevates this
form of analysis to a two-player planning task in an
attacker/defender scenario. In this scenario, the defender tries to
implement mitigation strategies to limit the impact of the worst-case
attacker strategy.
A Stackelberg planning task differs from a classical task by dividing
the set of actions into \textit{leader} (or attacker) actions
$\mathcal{A}^{\mathcal{L}}$ and into \textit{follower} (or defender)
actions $\mathcal{A}^{\mathcal{F}}$. Further, the goal states are now
defined for the defender, namely defender/follower goals
$\calG^\mathcal{F}$.
In this setting, an attack is composed of attacker actions,
but applies to a world state where the defender has applied a plan
composed of defender actions to the initial state. 
Every attack is annotated with some attacker reward, which
depends on the severity of the attack (e.g. number of corrupted
connections due to the attack). Defender actions come with a cost.
The Stackelberg planning algorithm computes the set of Pareto-optimal
pairs of attacker and defender plans.
For the soundness of the attacker model, it is enough to consider the
classical planning task where the follower actions are removed and only
the attacker goal is considered, but the initial state can be any
state reachable via defender actions. 
In our analysis, the initial state is, in fact, arbitrary.

\section{Symbolic Soundness and Completeness}
\label{motsymsound}
 
We introduce the concepts of \emph{symbolic soundness} and
\emph{symbolic completeness}, which relate the infrastructure
adversary model (formalized as a planning problem) to the Dolev-Yao
model~\cite{dolev1983security}. 
Our approach applies to security properties that can be expressed as
trace properties.
We start by introducing the necessary notation and concepts. Then we
introduce the conditions under which
symbolic soundness and
symbolic completeness hold. Finally, we prove these statements.

\subsection{Notation}
\label{sec:nota}


For a sequence $s\in \Sigma^*$, let $\setfun{s}$ be the set of elements in $s$.
For $e\in\Sigma$, $s \circ e$ denotes the concatenation with $e$.
For $S\subseteq \Sigma$, 
$\tracewo{s}{S}$ is $s$ with every element outside $S$ removed.

The IA model is formalized in terms of a finite set of planning
actions. 
We define
$\postsec = \post_0, \post_1, \ldots, \post_n$ 
to be the sequence of postconditions of some
plan $\pi = (\pre_0,\post_0), \ldots, (\pre_n,\post_n)$.
We define a planning trace of some plan $\pi$ as a sequence 
$\pt = s_1, s_2, \ldots, s_n$, where for all $i \in \{1, \ldots, n\}$,
$s_i \in \perm{\post_i}$ is some permutation of $\post_i$.
If all postconditions in $\pi$ are singleton sets,
it has only one $\pt$.
Let $\Ptraces(\sigma)$ be the union of all planning traces reaching
$\sigma$, and (with slight abuse of notation)
$\Ptraces \subseteq \calP^*$ denote the union of planning traces
over all states.

For generality, symbolic soundness and completeness are formulated
independent of the process calculus. 
We assume a set of traces
of form $\traces = \Events^*$ 
that represents the possible behavior of
a protocol and is usually specified by encoding it into a process.
To simplify the presentation and avoid introducing a mapping function,
we assume a non-empty intersection between
predicates $\calP$ and events $\Events$. 
Our aim is to match planning traces and protocol traces on this
intersection, which we denote by $\Inter$. 
Typically, the
predicates/events in this set signify the corruption of some party or
the partial compromise of certain infrastructure services
(cf.\ Table~\ref{table:corrupt} for examples).
We hence call them corruption predicates.
%
\begin{definition}{$\approx$-equivalence}\\
\label{def:wteq}
Let $\approx = (\Ptraces \cup \traces)^2$ 
s.t.\
$s \approx t \iff \tracewo{s}{\Sigma_{\cap}}$ = $\tracewo{t}{\Sigma_{\cap}}$.
\end{definition}
When all predicates $\calP$ are contained in $\Inter$,
our approach can be seen as a refinement, where planning traces
provide an abstract view on protocol traces.

\subsection{Symbolic Soundness}
\label{sec:defsymsound}
\label{sec:asusymsound}

We define the symbolic soundness of a planning task w.r.t.\ a set of
traces. We will then provide sufficient conditions for
this property. Two of them can be checked statically on the planning
problem; the third holds for most process calculi. The fourth induces
a set of trace properties that can be discharged to protocol
verifiers.
We say that a planning task is sound if any behavior of the
protocol, e.g. an attack, is represented in the planning task.

\begin{definition}[Symbolic Soundness]\label{def:symsound}
A planning task $\PT$ is symbolically sound 
w.r.t.\ a set of traces $\traceset\subset \Events^*$, 
if for every trace $t\in\traceset$, 
there is a planning trace $\pt \in \Ptraces$
s.t.
$\pt \approx t$. 
\end{definition}

\revised{Symbolic soundness provides guarantees with respect to the
    Dolev-Yao model. In case the Dolev-Yao model (represented by
    $\traceset$) is covered by a computational soundness result, these
    guarantees may translate to the computational model, but a~priori,
these are guarantees in a symbolic model of cryptography.}
We now state and discuss sufficient conditions for soundness for an
arbitrary but fixed planning problem 
$\PT = (\calP, \calI, \calA, \calG)$ 
and a set of traces
$\traceset$.

%
\begin{soundassumption}
\label{asu:postsing}
All postconditions are singleton.
\begin{align*}
\forall a = (\pre, \post) \in \mathcal{A}: |\post| = 1
\end{align*}
\end{soundassumption}
%
%
\begin{discussion}
This condition is true w.l.o.g. for all monotonic planning
tasks~\cite{bylander1994computational} \revised{whose postconditions are
positive.}
Any action with $n > 1$ postconditions $a = (\pre, \{c_1, ..., c_n\})$ 
can be split 
into $n$ actions 
$a_i = (\pre, \{c_i\})$ without losing completeness or soundness.
Since we never delete any information from the state, each plan where
$a$ occurs can be recovered by substituting $a$ with the sequence
$a_1,\ldots,a_n$.
Conversely, we can apply $a$ whenever any plan contains some $a_i$.
\end{discussion}

%
%
\begin{soundassumption}
\label{asu:postcomp}
All corruption predicates are reproducible in the planning model.
\begin{align*}
\forall e \in \Inter. \exists (\pre,\post) \in \mathcal{A}\ :\ \post = \{e\}
\end{align*}
\end{soundassumption}
\begin{discussion}
    This condition is largely technical.
    Note first that $\post$ is singleton by
    \autoref{asu:postsing}.
    The set of corruption predicates $\Inter$ should be chosen to represent all events where planning traces and protocol traces ought to match. Hence the planning model must be able to produce them.
    Furthermore, 
    any planning task can be transformed 
    so that all predicates in $\calP$ appear in
    some action's postcondition: we let $\calI=\emptyset$ and add an action that reaches the initial state. Now all actions with
    preconditions that do not appear in any postcondition can be
    removed and $\calP$ be set to the union of postconditions.
    As $\Inter \subseteq \calP$, this implies \autoref{asu:postcomp}.
\end{discussion}

\begin{soundassumption}
\label{asu:prefixclosed}
The set of traces is prefix-closed.  For any $k>0$
\begin{align*}
    \forall e_1,\ldots,e_k. (e_1,\ldots,e_k)\in\traceset \implies
    (e_1,\ldots,e_{k-1})\in \traceset
\end{align*}
\end{soundassumption}
\begin{discussion}
    This condition concerns the semantics of the process calculus.
    It holds for 
    ProVerif~\cite{blanchet_efficient_2001},
    Tamarin~\cite{hutchison_tamarin_2013} and
    Scyther~\cite{cremers2008scyther}.
\end{discussion}

\revised{
\begin{soundassumption}
\label{asu:presubset}
The production of predicates in $\Inter$ is \emph{not} dependent on predicates
outside of this set.
\begin{align*}
    \forall \post \in \Inter. \forall (\pre, \{\post\}) \in \mathcal{A}. \forall f \in \pre: f \in \Inter
\end{align*}
\end{soundassumption}
%
%
\begin{discussion}
	The corruption predicates $\Inter$ are used to describe the security
	model in both languages. With this condition we restrict the model to 
	independent of predicates outside of $\Inter$.
	We refrain from forbidding predicates outside of $\Inter$ in the planning
	model as they appear to be useful in quantitative tasks. For instance,  
	counting occurrences of specific corruption predicates can be
	essential in a quantitative analysis. Such a model would be depended on
	predicates in $\Inter$ but not vice versa.
\end{discussion}
}

%
%
\begin{soundassumption}
\label{asu:postsplit}
Let 
$\mathcal{A} = \mathcal{A}_{c_1} \uplus \mathcal{A}_{c_2} \uplus \cdots \uplus \mathcal{A}_{c_n}$ 
be the set of actions, 
partitioned into disjunct sets $\mathcal{A}_{c_i}$,
where there is exactly one set per postcondition $\set{c_i}$. (By $\autoref{asu:postsing}$, all postconditions are
singleton.)
We assume that, whenever  a postcondition $c_i$ appears in a trace,
then a matching precondition appears, too, namely the precondition of
\emph{some} action in $\calA_{c_i}$.
\begin{multline*}
    \forall i \in \{1..n\},  t \in \traceset:
c_i \in t \wedge c_i \in \Inter \implies \\
\exists a = (\pre_i, \set{c_i}) \in \mathcal{A}_{c_i}:
 \forall g \in \revised{\pre_i}: g \in t.
\end{multline*}
\end{soundassumption}
%
%
\begin{discussion}
    This property is a safety property 
    and
    can be shown 
    using any protocol verifier that handles correspondence
    properties, e.g. 
    Tamarin~\cite{hutchison_tamarin_2013} or Scyther~\cite{cremers2008scyther}.
    In Section~\ref{sec:caseres}, we use ProVerif
    \cite{blanchet_efficient_2001} to this end.
\end{discussion}

The following theorem establishes the soundness of this approach:
\begin{theorem}\label{the:symsound}
    If
    \autoref{asu:postsing},
    \autoref{asu:postcomp},
    \autoref{asu:prefixclosed},
    \autoref{asu:presubset}
    and \autoref{asu:postsplit} hold, then
    $\PT$ is symbolically sound.
\end{theorem}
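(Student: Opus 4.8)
The plan is to prove soundness by an explicit, constructive translation of every protocol trace into a plan. Fix a trace $t \in \traceset$ and write its restriction to the corruption predicates as $\tracewo{t}{\Inter} = e_1, e_2, \ldots, e_m$. The goal is to exhibit a valid plan $\pi$ whose planning trace $\pt$ satisfies $\pt \approx t$, i.e.\ $\tracewo{\pt}{\Inter} = e_1, \ldots, e_m$. Two preliminary simplifications make this tractable. By \autoref{asu:postsing} every postcondition is a singleton, so the planning trace of $\pi$ is literally the sequence of its actions' postcondition elements and the permutation freedom in the definition of $\pt$ never intervenes. By \autoref{asu:presubset} the production of a corruption predicate depends only on other corruption predicates, so the $\Inter$-fragment of the model is self-contained: if I build $\pi$ using only actions whose postcondition lies in $\Inter$, then its entire planning trace lies in $\Inter$ and $\tracewo{\pt}{\Inter} = \pt$. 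It therefore suffices to produce exactly the sequence $e_1, \ldots, e_m$, in order, with applicable actions.

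The construction proceeds by induction over the corruption events $e_1, \ldots, e_m$, using \autoref{asu:prefixclosed} to keep the relevant prefixes inside $\traceset$. Suppose I have already built a valid plan producing $e_1, \ldots, e_{k-1}$, and hence reaching a state containing $\set{e_1, \ldots, e_{k-1}}$ (nothing is ever deleted, so the state only grows). Let $t^{(k)}$ be the prefix of $t$ up to and including the position of $e_k$; by \autoref{asu:prefixclosed} we have $t^{(k)} \in \traceset$, and since $e_k \in t^{(k)}$ with $e_k \in \Inter$, \autoref{asu:postsplit} (with the producing action guaranteed to exist by \autoref{asu:postcomp}) yields an action $a_k = (\pre_k, \set{e_k}) \in \calA_{e_k}$ with $\pre_k \subseteq \setfun{t^{(k)}}$. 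Restricting with \autoref{asu:presubset} gives $\pre_k \subseteq \setfun{t^{(k)}} \cap \Inter = \set{e_1, \ldots, e_k}$, so every precondition of $a_k$ has already appeared among the first $k$ corruption events, and I append $a_k$ to the plan.

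Iterating to $k = m$ yields a valid plan $\pi$ whose planning trace is exactly $e_1, \ldots, e_m$; since this sequence lies entirely in $\Inter$ we obtain $\tracewo{\pt}{\Inter} = \tracewo{t}{\Inter}$, hence $\pt \approx t$, and $\pt \in \Ptraces$ because $\pi$ reaches a reachable state and $\Ptraces$ ranges over all states rather than only goal states. As $t$ was arbitrary, $\PT$ is symbolically sound.

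The hard part will be the \emph{availability} of preconditions at the moment $e_k$ is produced: applicability of $a_k$ needs $\pre_k \subseteq \set{e_1, \ldots, e_{k-1}}$, whereas the argument above only delivers $\pre_k \subseteq \set{e_1, \ldots, e_k}$. Prefix-closedness rules out forward dependencies, so no precondition can sit strictly after $e_k$; what remains is the degenerate self-dependency $e_k \in \pre_k$ at a first occurrence of $e_k$, which would leave $a_k$ inapplicable. I expect this to be the delicate point, and the natural way to close it is to exploit how \autoref{asu:postsplit} is actually discharged, namely as a correspondence property: a correspondence of the form ``if $e_k$ occurs then each $g \in \pre_k$ occurred'' is established with \emph{strict} precedence of the justifying events, which upgrades $\pre_k \subseteq \set{e_1, \ldots, e_k}$ to $\pre_k \subseteq \set{e_1, \ldots, e_{k-1}}$ and eliminates the self-dependency. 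One must also check that repeated occurrences of the same predicate are handled uniformly, but monotonicity of the state makes these harmless.
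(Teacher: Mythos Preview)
Your approach is essentially the paper's: induction on $|\tracewo{t}{\Inter}|$, using prefix-closedness to stay inside $\traceset$, \autoref{asu:postcomp} and \autoref{asu:postsplit} to find an action producing $e_{k+1}$, and \autoref{asu:presubset} to confine its preconditions to $\Inter$ and hence to the already-produced corruption events. The paper's write-up is terser but structurally identical.

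Your ``hard part'' paragraph is well spotted and, in fact, more careful than the paper. The paper simply asserts that the preconditions of $a^*$ lie in $t_k$ (the prefix with restriction $e_1,\ldots,e_k$), but \autoref{asu:postsplit} as formally stated only places them in whatever trace contains $e_{k+1}$, which gives you $\{e_1,\ldots,e_{k+1}\}$ rather than $\{e_1,\ldots,e_k\}$. The paper does not address the residual self-dependency $e_{k+1}\in\pre$ at all; it is glossed over. Your proposed fix --- reading \autoref{asu:postsplit} as the correspondence property it is actually verified as, i.e.\ with \emph{strict} precedence of the justifying events --- is exactly the right reading and closes the gap. So your proposal is correct, matches the paper's argument, and is explicit about a point the paper leaves implicit.
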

%
%
%
%
\begin{proof}
Proof by induction over the length of $\tracewo{t}{\Inter}$.

    \noindent
\emph{Base case $\abs{\tracewo{t}{\Inter}} = 0$:}
Let $\sigma$ = \calI. Then $\Ptraces(\sigma)$ = $\setfun{()}$. For the empty 
trace $t$, it holds that $\tracewo{t}{\Inter}$ = $()$ $\in$ $\Ptraces(\sigma)$.

\noindent
\emph{Inductive step:} Let $\abs{\tracewo{t}{\Inter}} = k+1$.\revised{
Let $\tracewo{t}{\Inter} = (e_1 e_2 .. e_{k+1})$. By \autoref{asu:prefixclosed} and the inductive hypothesis,
there is a 
$\tracewo{t_k}{\Inter} = (e_1 e_2 .. e_k)$ and a planning trace
$pt_k$, with $pt_k \approx t_k$.
From $pt_k$ 
we can infer that there exists a reachable state (of $\PT$) $\sigma_k$ 
with $\tracewo{\sigma_k}{\Inter} = \{e_1, e_2, .. e_k\}$.}

\revised{
By  \autoref{asu:postsing}, we get that all postconditions of any action in $\calA$
are singleton sets.}
By \autoref{asu:postcomp}, 
there exists an action $a \in \mathcal{A}$ with $a = (\pre_a, \post_a)$ and 
$ \post_a = e_{k+1}$.
Let $\mathcal{A}_{e_{k+1}}$ be the partition of all of $A$ containing all actions with postcondition $e_{k+1}$.
\revised{As $e_{k+1} \in \Inter$, by \autoref{asu:presubset} all preconditions are in $\Inter$, too.}
\revised{
By \autoref{asu:postsplit}, there exists an action
$a^* = (\pre, \{e_{k+1}\})$ s.t.\ for all $g \in \pre: g \in t_k$.
As $\pre\subseteq \Inter$, all $g\in\pre$ are in
$\setfun{\tracewo{t_k}{\Inter}}= \setfun{\tracewo{pt_k}{\Inter}}$
and thus in $\sigma_k$.
Applying $a^*$ to the state, we get $\sigma_{k+1}$ with 
$\tracewo{\sigma_{k+1}}{\Inter} = \{e_1, e_2, .. e_k,e_{k+1}\}$.}

Finally, we can conclude that there exists a planning trace $\pt_{k+1} \in \Ptraces(\sigma_{k+1})$ s.t. $t \approx \pt_{k+1}$, namely
$t \approx e_1 ... e_k e_{k+1} \approx \pt_k \circ e_{k+1} = \pt_{k+1}$.
\end{proof}

\subsection{Symbolic Completeness}
\label{sec:asusymcomp}
\label{sec:defsymcomp}

The complementing property to symbolic soundness is symbolic
completeness.
It ensures that the planning model does not introduce spurious attacks
that cannot occur in the protocol model.
Planning problems are frequently used to perform a quantitative
assessment of, e.g. the number of reachable goal states or the
probability of reaching certain assets. The correctness of such an
assessment relies on symbolic soundness \emph{and} symbolic completeness.
This is in contrast to computational completeness,
which is of little interest as long as the symbolic model is good
enough to provide verification results.

%
\begin{definition}[Symbolic Completeness]
\label{def:symcomp}
A planning task $\PT$ is symbolically complete 
w.r.t.\ $\traceset$
if 
for every  planning trace $\pt$,
there is a trace $t\in\traceset$ s.t.\ $\pt \approx t$. 
\end{definition}

We provide an additional assumption that ensures symbolic
completeness.
Unfortunately, it is a \emph{liveness property}, i.e., a property of
the form: `the [protocol] eventually enters a desirable
state'~\cite{lamport1977proving} and cannot be verified by the current
generation of protocol verifiers.

%
%
\begin{completeassumption}
\label{asu:hyper}
If an action is available and the trace contains the necessary
preconditions, then the trace can be extended so it contains this
action's postcondition.
\begin{align*}
    \forall t \in \traceset,& a = (\{p_1, ..., p_n\}, c) \in \mathcal{A}:
    \\
& \revised{c \in \Inter \wedge (\{p_1, ..., p_n\})} \subset \setfun{t} \implies \\
& \exists t' \in\traceset: t' = t \circ t_r \wedge (\setfun{t_r} \cap \Inter) = \set{c}.
\end{align*}
\end{completeassumption}
\begin{discussion}
%
Lamport~\cite{lamport1977proving} informally describes such properties 
as liveness properties. Note that here, the `desirable state' is an
additional attack step.
As we only consider finite traces, 
Alpern and Schneider's definition of liveness~\cite{alpern1985defining},
---
which is well known because it 
decomposes trace properties into safety and liveness properties
---
does not classify \autoref{asu:hyper} as a liveness
property.\footnote{%
According to their definition,
`no partial execution is irremediable since if some partial execution were irremediable, then it would be a ``bad thing''.'
}
Other characterisations do, see \textcite{kindler1994safety} for a survey.

Nevertheless,
state-of-the-art protocol verifiers in the unbounded
setting~\cite{blanchet_efficient_2001,hutchison_tamarin_2013,cremers2008scyther}
only support the specification of properties of the form 
$\forall t\in\traceset. \varphi(t)$ where $\varphi$ is a property that
is protocol-agnostic, i.e. invariant w.r.t.\ $\traceset$.
This prohibits a direct encoding of \autoref{asu:hyper}.

Backes, Dreier, Kremer, and Künnemann propose an encoding of liveness
properties for Tamarin that allows transforming liveness properties
into this fragment of safety properties~\cite{BaDrKr-2016-liveness}.
Their methodology is based on the idea that the protocol specifies
a way to reach the `desirable state,' e.g. by defining a recovery protocol.
Hence any trace either already reached a desirable state or it has not
exhausted all specified recovery steps --- which is a safety property.
Unfortunately, this approach does not apply here, as, in our case, the
protocol model is not meant to specify how an attack is
mounted.

An alternative approach to a direct encoding is to show that any trace
$t$ can be combined with any trace $t'$ that contains
$p_1,\ldots,p_n$, $c$, and nothing else. This may hold for processes
of a certain form. With such a result, protocol verifiers could again be
used to show the existence of $t'$. For the present paper, we leave
the verification of \autoref{asu:hyper} as an open question.
\end{discussion}

Under this condition, and if we assume the set of traces to be
prefix-closed, we obtain symbolic completeness.

\begin{theorem}
\label{the:symcomp}
    If
    \autoref{asu:postsing},
    \autoref{asu:prefixclosed},
    \autoref{asu:presubset}
    and
    \autoref{asu:hyper}
    hold, then
    $\PT$ is symbolically complete.
\end{theorem}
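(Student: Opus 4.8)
The plan is to mirror the structure of the soundness proof (\autoref{the:symsound}), but in the opposite direction: instead of reconstructing a plan from a protocol trace, I would reconstruct a protocol trace from a planning trace by extending it one corruption predicate at a time. By \autoref{asu:postsing} every postcondition is a singleton, so a planning trace is simply a sequence $\pt = c_1,\ldots,c_n$ of single predicates, each $c_i$ being the postcondition of the $i$-th action of the underlying plan. I would proceed by induction on $\abs{\tracewo{\pt}{\Inter}}$, the length of the restriction of $\pt$ to the corruption predicates, and prove that there is a $t\in\traceset$ with $\tracewo{t}{\Inter} = \tracewo{\pt}{\Inter}$, i.e.\ $\pt\approx t$. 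As in the soundness proof I would take the initial state to contain no corruption predicates (consistently with the $\calI=\emptyset$ normalisation discussed for \autoref{asu:postcomp}); this guarantees that the precondition of every action is produced by an earlier action of the plan, rather than supplied for free by $\calI$.

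For the base case $\abs{\tracewo{\pt}{\Inter}}=0$, the empty trace $()$ lies in $\traceset$ by \autoref{asu:prefixclosed} and satisfies $\tracewo{()}{\Inter}=()=\tracewo{\pt}{\Inter}$. For the inductive step, write $\tracewo{\pt}{\Inter} = d_1,\ldots,d_{k+1}$ and let $j$ be the position of the last element of $\pt$ lying in $\Inter$, so that $c_j = d_{k+1}$ and $c_{j+1},\ldots,c_n\notin\Inter$. The prefix plan producing $c_1,\ldots,c_{j-1}$ has restriction $d_1,\ldots,d_k$, so by the induction hypothesis there is a $t'\in\traceset$ with $\tracewo{t'}{\Inter}=d_1,\ldots,d_k$. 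To extend $t'$ by $c_j$ I would invoke \autoref{asu:hyper} with the action $a=(\pre_j,\set{c_j})$. Its hypothesis requires $c_j\in\Inter$ (which holds) and $\pre_j\subseteq\setfun{t'}$. Here \autoref{asu:presubset} is crucial: since $c_j\in\Inter$, all of $\pre_j$ lies in $\Inter$, and by applicability of the plan $\pre_j\subseteq\set{c_1,\ldots,c_{j-1}}$; hence $\pre_j \subseteq \set{c_1,\ldots,c_{j-1}}\cap\Inter = \setfun{\tracewo{t'}{\Inter}}\subseteq\setfun{t'}$. \autoref{asu:hyper} then yields a trace $t = t'\circ t_r \in\traceset$ with $\setfun{t_r}\cap\Inter = \set{c_j}$.

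The hard part is that \autoref{asu:hyper} only controls the \emph{set} of corruption events appended, $\setfun{t_r}\cap\Inter=\set{c_j}$, and not their number or order; $t_r$ may therefore contain several copies of $c_j$, so $\tracewo{t}{\Inter}$ could be $d_1,\ldots,d_k$ followed by more than one $c_j$, breaking the sequence equality demanded by $\approx$. I would resolve this again with \autoref{asu:prefixclosed}: since the only $\Inter$-elements of $t_r$ are copies of $c_j$, truncating $t$ to its prefix $t''$ ending at the \emph{first} occurrence of $c_j$ deletes no corruption event before that point, so $\tracewo{t''}{\Inter}=d_1,\ldots,d_k,c_j = \tracewo{\pt}{\Inter}$, and $t''\in\traceset$ by prefix-closure. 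This gives $\pt\approx t''$ and closes the induction. The only delicate points to get right are this multiplicity/truncation argument and the consistent treatment of the initial state; everything else is bookkeeping on the $\Inter$-restriction.
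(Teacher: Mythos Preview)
Your proposal is correct and follows essentially the same route as the paper: induction on $\abs{\tracewo{\pt}{\Inter}}$, using \autoref{asu:presubset} to place the preconditions inside $\Inter$ and \autoref{asu:hyper} to extend the protocol trace by the next corruption predicate. In fact your proof is slightly more careful than the paper's: the paper asserts $\tracewo{t_r}{\Inter}=e_{k+1}$ directly, whereas \autoref{asu:hyper} only gives $\setfun{t_r}\cap\Inter=\set{e_{k+1}}$, so your truncation-via-\autoref{asu:prefixclosed} step actually patches a small gap the paper glosses over.
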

\begin{proof}
Induction over the length of $\tracewo{\pt}{\Inter}$.

\noindent
\emph{Base case:} $\abs{\tracewo{pt}{\Inter}}$ = $0$: Holds trivially
for $\sigma = \calI$.

\noindent
\emph{Inductive step}: Let 
$\tracewo{pt}{\Inter}$ = $e_1 ... e_k e_{k+1}$. 
From the IH and \autoref{asu:prefixclosed},
we know that there exists a trace $t_k \approx e_1,\ldots, e_{k} \approx \pt_k$.
By definition of $\approx$, we know that $e_{k+1}\in\Events$
and 
from $\pt\in\Ptraces$, we conclude that there is
$a_{e_{k+1}} \in \mathcal{A}$ with $e_{k+1}$ as a postcondition which 
was used to construct $\pt$.
By \autoref{asu:postsing},
$a_{e_{k+1}}$ = ($\pre$, $\{e_{k+1}\}$).
\revised{
By \autoref{asu:presubset} we know that $\pre \subseteq \Inter$.\\
The preconditions are met:
$\pre \subset \setfun{\tracewo{t_k}{\Inter}}$
because
$\pre \subset \setfun{\tracewo{\pt_k}{\Inter}}$.}
Thus, we can apply
\autoref{asu:hyper}
for $a= a_{e_{k+1}}$ and $t=t_k$
to obtain a trace
$t' \approx t_k \circ t_r$
with $\tracewo{t_r}{\Inter} = e_{k+1}$. 
Hence
$t' \approx t_k \circ e_{k+1} \approx e_1 ... e_k e_{k+1} \approx pt$.
\end{proof}

To summarize:
in conjunction,
Theorem~\ref{the:symsound}
and Theorem~\ref{the:symcomp}
ensure that 
the set of planning traces induced by $\PT$ and the set of protocol traces $\traceset$
are equal modulo $\Inter$ 
if
conditions \autoref{asu:postsing} --- \autoref{asu:postsplit} and  \autoref{asu:hyper} are met.
This is necessary for risk estimation techniques that  compute the
expected loss of value or the probability of a breach.

\autoref{asu:postsing} to \autoref{asu:postcomp} are satisfied w.l.o.g.\ for
monotonic planning tasks and \autoref{asu:prefixclosed} is a standard
assumption in protocol verification.
\revised{\autoref{asu:presubset} is a restriction we place on the composition of
the security model and auxiliary models for the planning task.}
The remaining assumptions \autoref{asu:postsplit} and
\autoref{asu:hyper} are both trace properties\revised{, the former
a safety property, the latter a liveness property.}
Given the lack of tool support, we will now focus on symbolic
soundness, which ensures that that the planning model considers all
possible attacks.
If symbolic soundness holds, any quantitative result that is monotonic
in $\calA$ --- e.g.  the expected damage or the probability of
reaching a critical asset --- can be considered an upper bound,
provided, of course, that 
model parameters such as the value of assets and probabilities of
actions are correct.

\section{Applications}
\label{sec:app}

The previous section results lay the foundation for using
highly optimized planners for the analysis of large networks. We
envision the following applications.

\paragraph{Protocol analysis for limited network attackers}

Today's protocol verification focuses on protocols in isolation \revised{and
against an}
attacker who can eavesdrop and modify all messages on
the network.
In terms of communication, this is the worst-case assumption for
distributed services on the Internet.
On the other hand, underlying services like the PKI or name resolution
are almost always trusted and, more often than not, vastly simplified
to the point of complete abstraction. 
For perspective, the Dolev-Yao model, which formalized these
assumptions, is older than the first
\revised{implementation} of name resolution. 
%

Planning models scale much better to large problem sizes (in terms of
actions) than protocol verifiers, and are thus able to analyze
the security of protocols in threat scenarios that are more
complicated to describe.  Incorporating more precise assumptions about
the attacker could lead to more nuanced results, e.g. about protocol
security in various topologies.

\paragraph{Cost-benefit guided protocol deployment in the Internet}

Deployment assessment techniques are based on an infrastructure threat
model and consider the deployment of a protocol as a ‘countermeasure.’
Using the recently proposed ‘Stackelberg planning algorithm,’ it is
possible to obtain the set of all Pareto-optimal protocol deployments
per node.
This allows for an evaluation of the actual benefit of new proposals
vis-à-vis the current infrastructure of the Internet.
It makes it possible to compare proposals against each other that are
incomparable on paper, e.g. is DNSSEC a better solution against
JavaScript injection attacks than application-specific techniques like
subresource integrity on the HTML level. 

This technique has been applied to email~\cite{speicher2018formally}
and the web~\cite{giorgio}, comparing solutions at the routing layer,
resolution layer and application layer.
A weak point of this methodology was the lack of justification for
their
attacker model. Symbolic soundness and completeness can bridge this gap,
as we will demonstrate for a subset of the email model~\cite{speicher2018formally}.
As we argued in Section~\ref{sec:back}, to justify the correctness
of the Stackelberg planning problem,
it is sufficient
to show the symbolic soundness/correctness for the attacker planning
problem, but for arbitrary initial states.

\paragraph{Corporate network analysis}

Risk assessment techniques for local networks (e.g.
mulVal~\cite{ou2005mulval}) focus on implementation-level flaws, e.g.
buffer overruns, but often ignore the protocol
level implications. An attacker that captures the company's certificate authority
or authentication server can usually exploit this
infrastructure's trust to obtain critical assets. Moreover, modern cloud-based
services introduce new dependencies on external infrastructure. These aspects are rarely considered and could be improved by
a rule-based representation of the involved protocol's flaws.


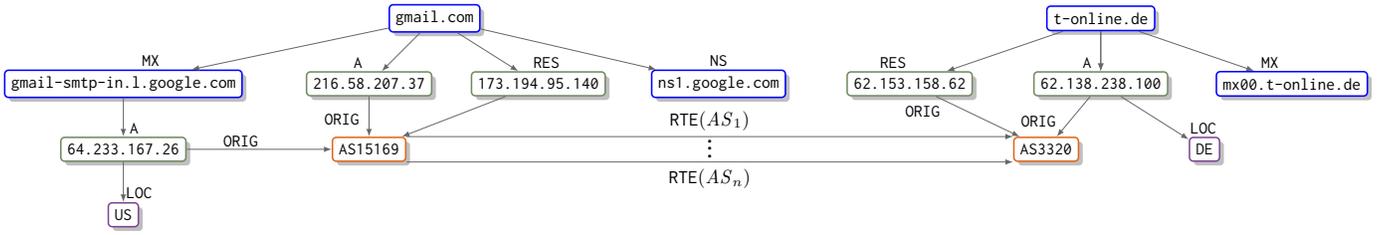
\begin{figure*}
\centering
\begin{adjustbox}{width=1\linewidth}
\begin{tikzpicture}[
  level 1/.style={sibling distance=38mm},    
  edge from parent/.style={->,draw=gray},
  level distance = 12mm,
  >=latex]

\node[dom] (root) {gmail.com} {

  child {node[dom] (d3) {gmail-smtp-in.l.google.com}
	child {node[ip]    (ip2) {64.233.167.26}
	child {node[cc]    (cc1) {US}}}}
  child {node[ip][right of=d3, xshift=3.5cm]   (ip1) {216.58.207.37}
	child {node[as]   (d6) {AS15169}
	}
}
  child {node[ip]   (ip_res) {173.194.95.140}}
  child {node[dom][xshift=-0.5cm]   (d1) {ns1.google.com}}

   
};

  \node[dom] (d12)  [right of=root, xshift=11.2cm] {t-online.de}{
    	 child {node[ip][right of=d1, xshift=2.5cm]   (ip_res2) {62.153.158.62}}
    	 child {node[ip][right of=ip_res2, xshift=2.5cm]   (ip3) {62.138.238.100}
        child {node[as][xshift=0.9cm]   (d7) {AS3320}}
       	 child {node[cc]    (cc2) {DE}}
}	
	child {node[dom][right of =ip3,xshift=2.5cm]   (d4) {mx00.t-online.de}    
}};
        
  \draw[->,draw=gray] (d12)  to (ip3);
  \draw[->,draw=gray] (ip_res)  to (d6);
  \draw[->,draw=gray] (ip2)  to (d6);
  \draw[->,draw=gray] (ip_res2)  to (d7);
    \draw[->,draw=gray] (d6.north east) -- (d7.north west) node[midway, above] {$\texttt{RTE}(\mathit{AS}_1)$} ;
    \draw[draw=none] (d6.east) -- (d7.west) node[midway, yshift=1.2mm] {\Large \vdots} ;
    \draw[->,draw=gray] (d6.south east) -- (d7.south west) node[midway, below] {$\texttt{RTE}(\mathit{AS}_n)$} ;
  \node [label, above=-0.06cm of d1,  xshift=0cm]   {\texttt{NS}};
  \node [label, above=-0.06cm of d3,  xshift=0.5cm]  {\texttt{MX}};
    \node [label, above=-0.06cm of d4,  xshift=-0.4cm]  {\texttt{MX}};
  \node [label, above=-0.06cm of ip1, xshift=-0.2cm]  {\texttt{A}};
  \node [label, above=-0.06cm of ip2, xshift=0.2cm]   {\texttt{A}};
    \node [label, above=-0.06cm of ip3, xshift=-0.25cm]   {\texttt{A}};
  \node [label, above=-0.06cm of cc1, xshift=0.29cm]   {\texttt{LOC}};
    \node [label, above=-0.06cm of cc2, xshift=-0.02cm]   {\texttt{LOC}};
  \node [label, above=0.1cm of d6,  xshift=-0.5cm]  {\texttt{ORIG}};
  \node [label, right=0.55cm of ip2,  yshift=0.15cm]  {\texttt{ORIG}};
    \node [label, above=0.05cm of d7, xshift=-0.14cm]  {\texttt{ORIG}};
     \node [label, below=0.05cm of ip_res2, xshift=0.24cm]  {\texttt{ORIG}};
      \node [label, above=-0.06cm of ip_res, xshift=0.15cm]   {\texttt{RES}};
            \node [label, above=-0.06cm of ip_res2, xshift=-0.3cm]   {\texttt{RES}};
\end{tikzpicture}
\end{adjustbox}
\caption{\revised{Snippet of the property graph.  (Taken from Figure 2 with permission of \cite{speicher2018formally}).}}
\label{fig:excerpt_G}
\end{figure*}

%
%
%

\section{Background: Email Case Study}
\label{sec:case}

We recall the email infrastructure attacker model
by~\citet{speicher2018formally} to justify its soundness in the next
chapter.
Using Stackelberg planning, they investigated how existing protocols
can be used to secure users against large-scale eavesdropping by
countries.
While the impact of many techniques is different depending on the
attacker and defender country (e.g. Russia and China are much more
self-reliant than, e.g. Brazil), the enforcement of TLS and improved
certificate validation have a significant impact throughout.
In the following, we will focus on their
threat model and infrastructure representation.


The email infrastructure is modeled as
a \emph{labeled property graph}~\cite{cispa1091}, \revised{which is simply
a graph with edge and node labels that describe
service providers and their interdependencies}.

\begin{definition}
\label{def:labeledpropgraph}
A \textit{labeled property graph} is a directed multigraph and described as a quadruple $G$ = ($V$,$E$,$\lambda$,$\mu$) over an alphabet $\Sigma$.
$V$ is the set of nodes. $E \subset (V \times V)$ is a set of edges between nodes.
The function $\lambda : V \cup E \rightarrow \Sigma$ maps a label from the alphabet $\Sigma$ to nodes and edges.
$\mu : (V \cup E) \times K \rightarrow S$ maps a string \emph{value} $s \in S$ to a node/edge and a \emph{key} $k \in K$.
\end{definition}

\begin{table}[ht]
\centering
\caption{Node labels (top) and edge labels (bottom).}
\label{table:labels}
\small
\begin{tabular}{ll}
\toprule
Labels & Description \\
\midrule
\addr		& Node for IP address \\
\domain		& Node for domain name.\\
\AS		& IANA number assigned to the AS.\\
\countr	& Country code \\
\midrule
\orig		& AS where lhs node originates from\\
\loc		& Country where lhs nodes is located\\
\A		& DNS record mapping Domain to Address\\
\MX		& DNS record mapping Domain to Domain\\
\NS		& DNS record for Name Servers\\
\DNS		& Resolving lhs requires resolving rhs \\
\RES	& lhs node uses resolver on rhs for
\		resolution\\
\RTE{AS$_t$}& AS-level route \revised{between ASes} via AS$_t$\\
\bottomrule
\end{tabular}
\end{table}

Table~\ref{table:labels} shows the node labels and edge labels 
used by~\citet{cispa1091}.
Figure~\ref{fig:excerpt_G} provides an example for the interaction
between two mail providers.
The green nodes represent IPv4 addresses and are labeled \addr.
They are associated to autonomous systems (orange, labeled $\AS$) via
the relation $\orig$.

The blue nodes represent domain names and are labeled $\domain$. They
are associated to IPs or other domains via the relations $\A$, 
$\MX$ and $\NS$ which encode the resources records that were obtained
by scanning. They designate the domain’s IP address ($\A$), its
responsible mail server ($\MX$) and its authoritative name server
($\NS$), respectively.

The label $\DNS$  records the relationship between authoritative
name servers and $\RES$ between mail servers and their resolvers.
$\RTE{\ASv_b}$ is used to record routing dependencies. If $\ASv_a$ is
connected to $\ASv_c$ and, somewhere along the way, a package might
traverse $\ASv_b$, an attacker at $\ASv_b$ could eavesdrop that
communication.
Domains, IPs and ASes are associated to countries via 
\loc edges.

\subsection{Infrastructure attacker model}\label{sec:threat}

An attacker in this model can be a country or a group of countries
that can corrupt all servers in their jurisdiction, as well as to
observe, intercept and alter all messages routed through their
jurisdiction.\footnote{Attacks from large ISP can be modeled
similarly~\cite{giorgio}.}
The IA model tracks infrastructure compromise at different levels
with corresponding predicates. 
For example, if the attacker has compromised the resolver of some
domain, we would consider the \emph{integrity} of all domain name
resolutions of this domain compromised. However, the resolved domains
themselves are not be compromised and may be safe to use for clients
that use a different resolver.
These predicates will be part of $\Inter$
and thus have to coincide with corresponding events in the protocol
model.

\begin{table}
\caption{Corruption predicates}
\label{table:corrupt}
\centering
\small
\begin{tabular}{lp{61mm}}
\hline
\compr($x$)			& Node $x \in  \domain \cup \addr \cup \AS \cup \countr$ under attacker control\\
\intrh($d$)	& Integrity of name resolution of $d \in \domain$ compromised\\
\intrrd($d',e'$)		& Integrity of some route from $d' \in \addr$ to $e' \in \addr$ is compromised\\
\intrh($d,e$)	& Integrity of name resolution of $e \in \domain$ from the perspective of $d \in \domain$ compromised\\
\unconf($d,e$)		& email communication from some user of $d \in$ Provider to some user of $e \in$ Provider
is considered unconfidential\\
\dnssec($d$)		& $d \in \domain$ does not support DNSSEC
\\
\hline
\end{tabular}
\end{table}

There are 16 rules (also called action schemas) that define how these
predicates can be derived. They are parametric in the graph: for
a given graph, they are compiled into a finite set of attacker actions
$\calA$ and predicates $\calP$.
Our focus is on the methodology; hence we will refer to
Appendix \ref{app:threat} for the full set of attacker rules and
only give a flavor of these rules with the following
simple example.

\begin{example}
\label{exa:attackrule}
\[
\infer{\intrd(d, e)}{d, e, r  \in \domain
& d \xrightarrow{\RES} r
& \intrrd(d, r) & \dnssec(e)}
\]
\end{example}

The intuition is as follows.
If the attacker controls the route from a domain to the resolver that
this domain uses, then
we consider the integrity of any name resolution this domain attempts
compromised. 
If the domain that is resolved uses DNSSEC, however, then the resolver
can verify the integrity of this signature and this attack vector is not available. 
(A different rule deals with the case where the resolver itself is
compromised.) 
%
The predicate $\dnssec$ cannot be produced by the attacker, as it 
is a defender predicate.

In \citeauthor{speicher2018formally}'s model, all attacker rules
that produce the predicate $\unconf$
are
associated with a reward in terms of the number of users affected. The
Stackelberg planning algorithm maximizes the sum of rewards.
As $\unconf\in\Inter$, the symbolic soundness result ensures that this
is an upper bound.

\subsection{Limitations}

To simplify presentation, we concentrate on the core model, consisting
of resolvers, DNS, DNSSEC and SMTP.
We left out SMTP over TLS, DANE and IPsec for secure inter-AS communication.
The protocol transformations we present in the next section would
apply to the full model, as these protocols could be added without
changing the structure of the processes.
Thus the methodology would be the same, but the ProVerif processes
would need to be extended.

The attacker model is not probabilistic, but relies on correct
attacker rewards, which \citet{speicher2018formally} estimated from
public sources.
These are model parameters and
need to be estimated.

\section{Background: ProVerif}
\label{sec:process}

In the following, we introduce ProVerif's dialect of the applied$-\pi$
calculus~\cite{blanchet_efficient_2001,blanchet2018proverif}. Readers familiar with it can safely skip to the next
section.

\subsection{Syntax}

\begin{figure}[ht]
\centering
\begin{tabular}{ll}
$M, N ::=$ & terms \\
\ \ $v,x,y,z$ & \ \ variable \\
\ \ $a,b,c,n,k,s$ & \ \ free name \\
\ \ $p,q$ & \ \ public name \\
\ \ $f(M_1,...,M_n)$ & \ \ constructor application \\
\ & \ \\
$P,Q ::=$ & processes \\
\ \ 0 & \ \ nil \\
\ \ $P|Q$  & \ \ parallel\\
\ \ $!P$ & \ \ replication\\
\ \ $\pin(M,x) .P$ & \ \ input\\
\ \ $\pout(M,N) . P$ & \ \ output\\
\ \ $(\nu a)P$ & \ \ restriction\\
\ \ $\pif\ M = N\ \pthen\ P$ & \ \\
\ \ \ \ \ \ \ \ \ \ $\pelse\ Q$ & \ \ conditional \\
\ \ $\plet\  x = g(M_1,...,M_n)\ \pin$  & \ \\
\ \ \ \ \ \ \ \ \ \ $P\ \pelse\ Q$ & \ \ destructor application\\
\ \ $\pevent(M) . P$ & \ \ event\\
\end{tabular}
\caption{Syntax of the process calculus}
\label{syntax}
\end{figure}

We present the syntax of the calculus in Figure~\ref{syntax}. Terms represent messages
and data. Processes represent entities/programs. We use $x,y,z$ to represent
variables, $a,b,c,n,k,s$ for free names and $p, q$ for public names.
We use $\FN$ and $\PN$ to refer to the set of free names and public names, respectively.
Both are arbitrary, but infinite.
The function symbol $f$ represents a \textit{constructor}
whereas we use $g$ to represent \textit{destructors}. 
Both are abstract function symbols with some fixed arity. 

\textit{Terms} are defined over names, variables, and the applications
of constructors. 
Destructors are used to manipulate terms in processes:
$let\ x = g(M_1,...,M_n)$ $in\ P$ $\pelse\ Q$ binds x to the result of the 
destructor application of $g$ on $M_1...M_n$ and continues with process
$P$. If the application fails, however, we continue with process $Q$.
A destructor g is defined by a finite set of 
reductions $\mathit{def}(g) := g(N_1,...,N_n) \rightarrow N$ where the terms
$N,N_1,...,N_n$ are build without free names and $var(N) \subset 
var(N_1 \cup ... \cup N_N)$. A destructor fails, if no reduction
applies.

\begin{example}
Symmetric encryption is described by a 2-ary constructor 
$\senc$ and a 2-ary destructor with the following reduction:
\[
\sdec(\senc(x,y),y) \to x 
\]
\end{example}

We write $fn(P)$ (and $fv(P)$) for
the sets of names (variables) that are free in $P$.
A substitution $\Frame$ = $\{ \substi{t_1}{x_1},$ ..., $\substi{t_n}{x_n} \}$ 
is a partial function, mapping variables to terms. The domain of $\Frame$ is
$\mathbb{D}(\Frame)$ = $\{x_1, ..., x_n\}$ and $\Frame$ maps $x_i$ to $t_i$.
The application of $g$ on the terms
$M_1,...,M_n$ is defined if and only if there exists some substitution
$\Frame$ and a reduction rule $g(N_1,...,N_n) \rightarrow N$ such
that for all $i \in \{1,...,n\}$ it holds that $N_i = M_i\sigma$.
In this case, $\plet\ x = g(M_1,...,M_n)\ \pin\ P\ \pelse\ Q$ would bind x
to $N\Frame$ and continue to execute $P$. 

Additionally, the process calculus provides the instruction $\pevent(F).P$ to emit
some $F \in \EventSig$ as an annotation of the process and continue to execute $P$.
We define the set of these annotations as
\[
    \Events := \set{F(t_1, .., t_k) \mid t_i\text{ terms}, F\in\EventSig \text{ with arity $k$}}. 
\]
%
The remaining constructs depicted in Figure~\ref{syntax} are standard
constructs included in the $\pi$-calculus. $0$, or the nil process,
indicates the end of the process and does nothing. $P|Q$ composes $P$
and $Q$ in parallel and $!P$ represents and unbounded number of copies
of $P$ in parallel composition. A \textit{channel} can be any term $M$.
The process $\pin(M,x).P$ receives a message on channel $M$. It then continues to execute $P$ with $x$ being bound to the received message. 
$\pout(M,N) . P$ outputs a term $N$ on channel $M$
and executes $P$. $(\nu a)P$ depicts a restriction. It first creates
a free name $a$ and then executes $P$. A free name $a$ is a secret
and cannot be guessed, but it may be obtainable via computation/deduction of
public messages. 
The conditional $\if\ M = N\ \then\ P\ \else\ Q$ compares
two terms $M$ and $N$ and executes $P$ if they are equal and $Q$ otherwise.
\begin{full}
Note that this is just a special case of destructor application. Let $\mathfun{equal}$
be a destructor symbol and $\mathfun{def(equal)}$ = $\set{\mathfun{equal}(M,M) \rightarrow M}$.
Then $\pif\ M = N\ \pthen\ P\ \pelse\ Q$ can also be expressed as $\let\ x=\mathfun{equal}(M,N)$
$\pin\ P\ \pelse\ Q$ with $x$ being not free in $P$ and $Q$.
\end{full}
For brevity, we will omit trailing $0$ processes and empty else-branches.
%

\begin{figure*}
\begin{align*}
(\Eps, \Process \mulcup \{0\}, \Frame) \ \ &\rightarrow \ \ (\Eps, \Process, \Frame)\tag{\textsc{null}}\\
(\Eps, \Process \mulcup \{P \mid Q\}, \Frame) \ \ &\rightarrow \ \ (\Eps, \Process \mulcup \{P,Q\}, \Frame)\tag{\textsc{par}}\\
(\Eps, \Process \mulcup \{!P\}, \Frame) \ \ &\rightarrow \ \ (\Eps, \Process \mulcup \{P,!P\}, \Frame)\tag{\textsc{repl}}\\
(\Eps, \Process \mulcup \{\nu a;P\}, \Frame) \ \ &\rightarrow \ \ (\Eps \cup \set{b}, \Process \mulcup \{P\set{\substi{b}{a}}\}, \Frame)\tag{\textsc{new}} \ \text{if b is free an not in }\Eps\\
(\Eps, \Process \mulcup \{\text{out}(t,M);P\}, \Frame) \ \ &\rightarrow \ \ (\Eps, \Process \mulcup \{P\}, \Frame \cup \{\set{\substi{M}{x}}\})\tag{\textsc{out}} \ \text{if x is fresh and}\ \nu\Eps.\Frame \vdash t\\
(\Eps, \Process \mulcup \{\text{in}(t,x);P\}, \Frame) \ \ &\rightarrow \ \ (\Eps, \Process \mulcup \{P\set{\substi{M}{x}}\}, \Frame)\tag{\textsc{in}} \ \text{if}\ \nu\Eps.\Frame \vdash M\ \text{and if}\ \nu\Eps.\Frame \vdash t\\
(\Eps, \Process \mulcup \{\mathit{let}\ x=M\ \mathit{in}\ P\ \pelse\ Q\}, \Frame) \ \ &\rightarrow \ \ (\Eps, \Process \mulcup \{P\set{\substi{M}{x}}\}, \Frame)\tag{\textsc{lets}} \ \text{if evaluation of M succeeds}\\
(\Eps, \Process \mulcup \{\mathit{let}\ x=M\ \mathit{in}\ P\ \pelse\ Q\}, \Frame) \ \ &\rightarrow \ \ (\Eps, \Process \mulcup \{Q\}, \Frame)\tag{\textsc{letf}} \ \text{if evaluation of M fails}\\
(\Eps, \Process \mulcup \{event(F);P\}, \Frame) \ \ &\xrightarrow{F} \ \ (\Eps, \Process \mulcup \{P\}, \Frame)\tag{\textsc{event}}
\end{align*}
\caption{Operational semantics.}
\caption*{Note that evaluation of some $M$ succeeds, if for all destructor symbols in $M$, there is an applicable rewrite rule. If there is a destructor symbol in $M$ which has no applicable rewrite rule, then evaluation fails.}
\label{semantics}
\end{figure*}

\subsection{Semantics}
\label{semanticsapp}

\begin{figure}[ht]
\centering
\[
    \infer[(\textsc{DName})]{\frameabr \vdash a}{a \in FN \cup PN\ a \not\in \overrightarrow{n}},\,
    \infer[(\textsc{DFrame})]{\frameabr \vdash x\Frame}{x \in \mathbb{D}(\Frame)}
\]
\[
    \infer[(\textsc{DCon})]{\frameabr \vdash f(t_1, ..., t_n)}{\frameabr \vdash t_1 & \ldots & \frameabr \vdash t_n\ & f\ \not\in\ \priv}
\]
\[
    \infer[(\textsc{DDes})]{\frameabr \vdash t}{\frameabr \vdash t_1\ ...\ \frameabr \vdash t_n & \{d(t_1,...,t_n) \rightarrow t\} \in \mathit{def}(g)}
\]
\caption{Deduction rules.}
\label{deducrules}
\end{figure}

We define the semantics 
by first introducing the notions of \textit{frame} and \textit{deduction}.
A frame $\frameabr$ represents
a sequence of messages observed so far and the secrets
generated by the protocol. The first is captured by
a substitution $\Frame$, the latter by the set of used names $\Eps$.

Deduction describes the capabilities of an 
adversary to infer and compute new terms from already observed messages.
We define the deduction relation $\frameabr \vdash t$ 
between a frame and a derivable term as the smallest relation s.t.\
the rules in Figure \ref{deducrules} hold.
We further define $\priv$ as a subset of all constructor symbols where
the \textsc{DCON} deduction rule cannot be used. We refer 
to $\priv$ as \textit{private constructor symbols}.

The \textit{operational semantics} are defined by a labeled transition
relation between process configurations. This \textit{configuration}
is represented by a 3-tuple $(\Eps,\Process,\Frame)$. 
$\Process$ is a multiset representation of processes being executed in parallel.
$\Eps$ is the set of free names generated by the processes in $\Process$.
$\Frame$ is a substitution modeling the messages observed by the environment.

The labeled transition relation of our calculus can be found in Figure~\ref{semantics}.
Each transition between two configurations is labeled with some $F \in \Events \cup \{\emptyset\}$.
For the ease of presentation, we omit empty sets and write $\rightarrow$ instead of $\xrightarrow{\emptyset}$.
We define $\rightarrow^{*}$ to represent multiple application of transition rules labeled with
the empty set. For the other $F \in \Events$ we define $\xRightarrow{F}$ as
$\rightarrow^* \xrightarrow{F} \rightarrow^*$.

\begin{definition}[Traces]
\label{def:trace}
Given a process $P$, we now define its traces:
\begin{multline*}
\traces(P) = \left\{ ( F_1,...F_n ) \mid (\emptyset, \{P\}, \emptyset)
\xRightarrow{F_1}(\Eps_1, \Process_1, \Frame_1) \right. \\
\left.
\xRightarrow{F_2} \ldots
\xRightarrow{F_n}(\Eps_n, \Process_n, \Frame_n) \right\}
\end{multline*}
\end{definition}

\section{Case Study: Email} 
\label{sec:caseres}

We now come back to \citeauthor{speicher2018formally}'s email case study (Section~\ref{sec:case}) to investigate the symbolic soundness of their model.
Our focus will be on the methodology. We first present a translation from labeled property graphs into processes. Verifying this process for each property graph is impractical, both because of the size of the graph (protocol verifiers do not scale well with the model size) and 
because any change in the infrastructure would require a new analysis.
Hence, we define two process transformations that allow for a sound mapping of all these processes to a single process, i.e. an over-approximation. We verify this process in ProVerif and can thus provide a symbolic soundness result for all process graphs at once.




\subsection{Symbolic model}
\label{sec:models}
\label{subsec:cons}

\lstset{
  language={Proverif},
  basicstyle=\small\ttfamily, 
  columns=fixed, 
  keepspaces=true, 
  showstringspaces=false, 
  breaklines=true, 
  numbers=none
}

\begin{figure*}[ht!]
    \small
\centering

\begin{align*}
    \calF(G)=
\bigPar\limits_{v \in V_\prov} 
\left(
    \begin{array}{lcl}
 & \bigPar\limits_{\substack{
    d_\MX^\client, i_\MX^\client, \as_\MX^\client,
    e_\MX^\client, j_\MX^\client, \as_1 \in V_\MX,
    r^\client_\RES, j_\RES^\client, \as_2 \in V_\RES, 
    \as^1 , \as^2 \in V, 
    v_2 \in V_\prov
    .
    \\
    v \rightarrow d_\MX^\client \xrightarrow{A} i_\MX^\client
    \xrightarrow\orig \as_\MX^\client,
     \\
     v_2 \rightarrow e_\MX^\client \xrightarrow{A} j_\MX^\client \xrightarrow\orig \as_1
     \wedge (\as_1 = \as_\MX^\client \vee \as_1 \xrightarrow{\RTE{\as^1}} \as_\MX^\client),
 \\
 r^\client_\RES \xrightarrow{A} j_\RES^\client \xrightarrow\orig \as_2
 \wedge (\as_2 = \as_\MX^\client \vee \as_2 \xrightarrow{\RTE{\as^2}} \as_\MX^\client),
\\
\vecin{c} = (v, d_\MX^\client, i_\MX^\client, \as_\MX^\client) ,  \vecin{s} = (v_2, e_\MX^\client, j_\MX^\client, \as_1), \vecin{r} = ( r^\client_\RES, j_\RES^\client, \as_2)
    }}
&
    !P_\smtpclient(\vecin{c},\vecin{s},\vecin{r})
\\
\mid 
&
\bigPar\limits_{\substack{
        d_\MX^\server, i_\MX^\server, \as_\MX^\server,
		e_\MX^\server, j_\MX^\server, \as_1 \in V_\MX, 
		\as \in V,
		v_2 \in V_\prov
        .\\
v \rightarrow d_\MX^\server \xrightarrow{A} i_\MX^\server
\xrightarrow\orig \as_\MX^\server 
\\
 v_2 \rightarrow e_\MX^\server \xrightarrow{A} j_\MX^\server \xrightarrow\orig \as_1
     \wedge (\as_1 = \as_\MX^\server \vee \as_1 \xrightarrow{\RTE{\as}} \as_\MX^\server),
     \\
     \vecin{s} = (v, d_\MX^\server, i_\MX^\server, \as_\MX^\server) ,  \vecin{c} = (v_2, e_\MX^\server, j_\MX^\server, \as_1)
}}
&
!P_\smtpserver(\vecin{s}, \vecin{c})
\\
\mid
&
\bigPar\limits_{\substack{
        d_{\RES}, i_{\RES}, \as_{\RES} \in V_\RES
	  e_{\MX}, i_\MX, \as_1 \in V_\MX,
	n_{\DNS}, i_\DNS, \as_2 \in V_\DNS, 
	n_{\rns}, i_\rns, \as_3 \in V_\rns, 
	\as^1, \as^2, \as^3 \in V
        .\\
v \rightarrow d_{\RES} \xrightarrow{A} i_{\RES}
 \xrightarrow\orig \as_{\RES} ,
  \\
  (v \rightarrow e_{\MX} \xrightarrow{A} i_\MX \xrightarrow\orig \as_1,
  \wedge (\as_1 = \as_\RES \vee \as_1 \xrightarrow{\RTE{\as^1}} \as_\RES),
  \\
(n_{\DNS} \xrightarrow{A} i_\DNS \xrightarrow\orig \as_2,
 \wedge (\as_2 = \as_\RES \vee \as_2 \xrightarrow{\RTE{\as^2}} \as_\RES),
 \\
(n_{\rns} \xrightarrow{A} i_\rns \xrightarrow\orig \as_3,
 \wedge (\as_3 = \as_\RES \vee \as_3 \xrightarrow{\RTE{\as^3}} \as_\RES),
\\
\vecin{r} = (d_{\RES}, i_{\RES}, \as_{\RES}), \vecin{c} = (v, e_\MX, i_\MX, \as_1),  \vecin{d} = (d_{\DNS}, i_{\DNS}, \as_2), \vecin{root} = (d_{\rns}, i_{\rns}, \as_3)
  }} 
&
!P_\dnsres(\vecin{r}, \vecin{c}, \vecin{d}, \vecin{root})
\\
\mid 
&
\bigPar\limits_{\substack{
        d_{\DNS}, i_{\DNS}, \as_{\DNS}  \in V_\DNS,
	r_{\RES}, i_\RES, \as_1 \in V_\RES, 
	\as \in V
        .\\
 v \rightarrow d_{\DNS} \xrightarrow{A} i_{\DNS}
  \xrightarrow\orig \as_{\DNS},
  \\
  r_{\RES} \xrightarrow{A} i_\RES \xrightarrow\orig \as_1
  \wedge (\as_1 = \as_\DNS \vee \as_1 \xrightarrow{\RTE{\as}} \as_{\DNS} ),
  \\
\vecin{d} = (d_{\DNS}, i_{\DNS}, \as_{\DNS}), \vecin{r} = (r_{\RES}, i_\RES, \as_1)
  }} 
&
 !P_\dnsns(\vecin{d}, \vecin{r} ) 
 \\
 \mid 
&
\bigPar\limits_{\substack{
        d_{\rns}, i_{\rns}, \as_{\rns}  \in V_\rns,
r_{\RES}, i_\RES, \as_1 \in V_\RES, 
 \as \in V
        .\\
 v \rightarrow d_{\rns} \xrightarrow{A} i_{\rns}
  \xrightarrow\orig \as_{\rns},
  \\
   r_{\RES} \xrightarrow{A} i_\RES \xrightarrow\orig \as_1
  \wedge (\as_1 = \as_\DNS \vee \as_1 \xrightarrow{\RTE{\as}} \as_{\rns} )
   \\
  \vecin{root} = (d_{\rns}, i_{\rns}, \as_{\rns}), \vecin{r} = (r_{\RES}, i_\RES, \as_1)
  }} 
&
 !P_\dnsrns(\vecin{root}, \vecin{r}  ) 
    \end{array}
\right)
\end{align*}
\caption{Function $\mathcal{F}$ from property graphs to processes.}
\label{paraprocess}
\end{figure*}

We define a function $\calF$ from property graphs to processes in
Figure~\ref{paraprocess}. We use the following notation:
\begin{itemize}
\item For a finite set $S=\set{a,..,z}$, 
    $\bigPar\limits_{s \in S} P(s)$ denotes  $P(a) \mid \ldots \mid P(z)$.
    Instead of $s\in S$, we sometimes use set-builder notation to
    directly define the components of each $s$. 
\item For a fixed labeled property graph $G$ that is implicit in the
    context, we write $V_x$ as a subset of all nodes in $V$ with label
    $x$. We write $y \xrightarrow{L} z$ to represent an edge in
    G labelled with $L$ connecting the two nodes $x$ and $y$. To ease
    notation, we use $d,e$ for nodes representing domain names, $r$
    for resolvers and $n$ for name servers.
\item We further assume that all nodes are public names, to avoid
    introducing a mapping.
\end{itemize}

Our process represents SMTP, DNS, DNSSEC, resolvers, and a simplified version of inter-AS communication. As we focus on the methodology, we do not elaborate on the subprocesses modeling these protocols, but on the top-level process that composes them.
The processes $P_\smtpserver$ and $P_\smtpclient$ describe the client and server roles within the SMTP protocol. Each provider $v$ defines several mail servers  $d_\MX^{\client / \server}$ (or $e_\MX^{\client / \server}$) via the $\MX$ resource record. Each of those execute both client and server roles. They have one or many
IP addresses  $i_\MX^{\client / \server}$, which are located in autonomous systems $\as_\MX^{\client / \server}$. Whereas the process $P_\smtpserver$ only models the receiving part of the SMTP protocol, $P_\smtpclient$ models DNS/DNSSEC requests as well as the client role of SMTP.
To establish the connections between the different services, we use the IP addresses to model channels between them. These channels are built over private constructors. 
In contrast to the Dolev-Yao model,  the attacker cannot eavesdrop or manipulate messages per default,  but needs to obtain access to these channels by compromising either domain names, IP addresses, or ASes.

The processes $P_\dnsres$, $P_\dnsns$, and $P_\dnsrns$ describe the resolver and server role within the DNS protocol, which, depending on the server's configuration, include the DNSSEC extension. Process $P_\dnsres$ models the resolver role by communicating with the DNS/DNSSEC infrastructure, on the one hand, and with the requesting role of the mail server. As with the previously mentioned process, the IPs are used to construct private channels via private constructors.
The same holds for the name server role modeled by $P_\dnsns$. An exception is the process $P_\dnsrns$ modeling root name servers. We assume that root servers cannot be corrupted, since that would break the DNS/DNSSEC infrastructure as a whole. Therefore, the attacker is not able to corrupt the connection between the root server process and the process modeling the resolver role. Connections established by the processes $P_\dnsres$ and $P_\dnsns$, however, may be corrupted by corrupting their domain names, IP addresses, or ASes. 
For simplicity, we constrained our DNS model to two levels of name servers.
%

With this construction we represent the structure of the IA model.
We instantiate all communication paths and relations in the IA model using the same labeled property graph $G$. Further, all featured protocols and functionalities of the IA model are represented by subprocesses in our model, as well as the notion of corruption.

In the follow up, we will modify the top-level structure, but leave
the processes 
$P_\smtpclient,
P_\smtpserver,
P_\dnsres,
P_\dnsns$
and
$P_\dnsrns$
intact. They are detailed in \appendixorfull{sec:full-model}.

\subsection{Proof via sound process transformations}
\label{justify}

With $\calF$, we can, in principle, verify the symbolic soundness
of each planning task induced by some property graph $G$.
The respective model $\calF(G)$ can become very large: property graphs
can have thousands to millions of nodes, whereas the majority of
protocol models fits on a piece of paper. Protocol verifiers are not
optimized for models of this size.
%
Moreover, it is tedious to generate and verify a process whenever a new
attacker country is considered or the property graph is modified.
Last but not least, the analogy to computational soundness (Sec.~\ref{sec:analogy})
suggest that symbolic soundness results
(a) should encompass some set of protocols 
and
(b) apply to any network composed of them, here described by the property graph.
\footnote{Computational soundness results fix a set of
cryptographic primitives, but hold for a class of protocols.}
Conceptually, we therefore desire a result that is independent of $G$.

\revised{
To this end, we propose the following proof technique specific to the
applied-$\pi$ calculus.
Let $\calF(G)$ be a function from property graphs\footnote{We use
property graphs for concreteness, the actual representation of
the network is irrelevant, as long as it translates to planning models
and processes in a uniform way.}
to processes and assume that it can be expressed only using 
the applied-$\pi$ calculus and
the meta language operation $\bigPar\limits_{s \in S} P(s)$.
In the first step, we
construct a process $P$ such that, for all $G$,
$\traces(\calF(G)) \subseteq \traces(P)$. 
This implies that every trace property that holds for $P$ also holds for
$\calF(G)$, independent of $G$.
To this end, we apply to two transformations 
that
over-approximates a process.
}

\revised{
The first permits substituting
several uniform parallel processes 
$\bigPar\limits_{s \in S} P(s)$
by a single
process under replication that obtains this input from the adversary.
In the description of $\calF(G)$, $G$ can only occur within these $S$, hence the
resulting process is now independent of $G$. 
}
\begin{lemma} With
\label{con:replin}
\begin{align*}
\traces(!\pin(\vecv).P) \supseteq \traces(\bigPar\limits_{\vecin{p}}P\substi{\vecin{p}}{\vecv})
\end{align*}
we relate the replication of a process $in(\vecv).P$, where $\vecv$ can be supplied 
by the adversary (i.e., the frame), to a finite parallel execution of the same process $P$, where $\vecv$
gets substituted with public names supplied by $G$. 
\end{lemma}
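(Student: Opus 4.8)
The inclusion to establish is the $\supseteq$ direction: every trace produced by the finite parallel composition $\bigPar\limits_{\vecin{p}} P\substi{\vecin{p}}{\vecv}$ must also be producible by the single replicated process $!\pin(\vecv).P$. The plan is to fix an arbitrary $t \in \traces(\bigPar\limits_{\vecin{p}} P\substi{\vecin{p}}{\vecv})$ and exhibit an execution of $!\pin(\vecv).P$ emitting the same label sequence. This execution splits into a \emph{silent set-up phase}, which spawns exactly one copy of $P\substi{\vecin{p}}{\vecv}$ per tuple $\vecin{p}$, followed by a faithful replay of $t$ in which the residual replication stays dormant.

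For the set-up phase, observe that the index set of the meta-operator $\bigPar$ is finite, so there are only finitely many tuples $\vecin{p}$. For each of them I would unfold the replication once with the \textsc{Repl} rule, producing a copy of $\pin(\vecv).P$, and then apply the \textsc{In} rule to receive $\vecin{p}$ on the (public) input channel, binding $\vecv$ to $\vecin{p}$ and yielding $P\substi{\vecin{p}}{\vecv}$. The side condition of \textsc{In} requires the received tuple to be deducible from the current frame; since every component of $\vecin{p}$ is a public name supplied by $G$ and all graph nodes are assumed to be public names, each component is derivable by \textsc{DName} and the whole tuple by \textsc{DCon}, irrespective of the (here still empty) frame. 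Crucially, both \textsc{Repl} and \textsc{In} carry the empty label and leave $\Eps$ and $\Frame$ untouched, so this phase emits no events. After processing every $\vecin{p}$ we reach the configuration $(\emptyset, \{\,P\substi{\vecin{p}}{\vecv} \mid \vecin{p}\,\} \mulcup \{!\pin(\vecv).P\}, \emptyset)$, which is precisely the initial multiset of the parallel composition together with the inert leftover $!\pin(\vecv).P$.

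It then remains to replay $t$ from this configuration, which rests on a \emph{weakening} (context) property of the semantics: whenever a configuration $(\Eps,\Process,\Frame)$ performs an execution emitting $t$ and ending in $(\Eps',\Process',\Frame')$, the enlarged configuration $(\Eps, \Process \mulcup \{Q\}, \Frame)$ performs an execution emitting the same $t$ and ending in $(\Eps', \Process' \mulcup \{Q\}, \Frame')$, for any process $Q$. Intuitively, a never-scheduled process cannot block any step, because each operational rule matches only on a sub-multiset and carries the remainder along unchanged. Instantiating this with $Q = {!\pin(\vecv).P}$ turns the replay of $t$ by $\bigPar\limits_{\vecin{p}} P\substi{\vecin{p}}{\vecv}$ into a replay by the post-set-up configuration; prepending the silent set-up transitions then yields the desired execution of $!\pin(\vecv).P$, and hence $t \in \traces(!\pin(\vecv).P)$.

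The main obstacle is the weakening property. One has to check, rule by rule, that enlarging the process multiset by an inert component never removes an available transition and never alters the emitted label; in particular, the freshness condition of \textsc{New} and the deducibility conditions of \textsc{In} and \textsc{Out} depend only on $\Eps$ and $\Frame$, which the carried-along $Q$ does not touch (so fresh names can always be renamed apart and all adversary deductions remain valid). Everything else---finiteness of the index set, the silence of \textsc{Repl} and \textsc{In}, and the deducibility of public names via \textsc{DName} and \textsc{DCon}---is routine.
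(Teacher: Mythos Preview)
Your proposal is correct and follows essentially the same approach as the paper: unfold the replication once per tuple $\vecin{p}$ via \textsc{Repl}, feed in the public names via \textsc{In} (both silent), and then rely on the fact that the leftover $!\pin(\vecv).P$ in parallel does not restrict the traces of $\bigPar_{\vecin{p}} P\substi{\vecin{p}}{\vecv}$. The only difference is that you spell out the weakening/context-closure property explicitly and justify it rule by rule, whereas the paper uses it implicitly in the final inclusion $\traces(!\pin(\vecv).P \mid \bigPar_{\vecin{p}} P\substi{\vecin{p}}{\vecv}) \supseteq \traces(\bigPar_{\vecin{p}} P\substi{\vecin{p}}{\vecv})$.
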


\revised{
    We may now automatically conduct the following verification steps 
 with ProVerif (or any other
verifier). ProVerif's abstraction in particular is sensitive to how
deep in a process an input occurs. The second transformation thus
permits pushing inputs further inside the process to aid
verification.
}
\begin{lemma}
\label{lem:in}
For all processes $Q$ that contain exactly one subprocess $\pin(x).P$,
let $Q'$ be $Q$ with $P$ instead of this subprocess. Then:
$
\traces(\pin(x).Q')
\subseteq
\traces(Q).
$
\end{lemma}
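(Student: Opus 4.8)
The plan is to take any trace $tr=(F_1,\dots,F_n)\in\traces(\pin(x).Q')$ together with a witnessing execution and reshape it, step by step, into an execution of $Q$ with the identical event sequence. Since $\pin(x).Q'$ is input-prefixed, its witnessing run must begin with the \textsc{in} rule firing the front input, binding $x$ to some message $M$ received on the (elided) channel $c$; by the side condition of \textsc{in}, both $M$ and $c$ are derivable from the initial empty frame $(\emptyset,\emptyset)$. After this first (silent) step the configuration is $(\emptyset,\{Q'\substi{M}{x}\},\emptyset)$, and the remainder is an execution $\rho$ of $Q'\substi{M}{x}$. Because $x$ is bound inside the distinguished subprocess of $Q$, its only free occurrences in $Q'$ lie in the continuation $P$; hence $\substi{M}{x}$ touches only that part, and $Q'\substi{M}{x}$ is syntactically $Q$ with the single occurrence of $\pin(x).P$ replaced by $P\substi{M}{x}$, the surrounding context being identical.

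First I would record two auxiliary facts. Derivability is monotone: no rule removes an entry from $\Frame$, so if $\frameabr\vdash u$ holds at one configuration it holds at every later one, provided the restricted set $\Eps$ does not capture a name of $u$. Since $M$ and $c$ are derivable from the empty frame, they contain only free/public names and no restricted names; adopting the standard convention that each \textsc{new} step picks a name fresh with respect to all names occurring so far — in particular those of $M$ and $c$ — guarantees that $M$ and $c$ remain derivable throughout any continuation. This is exactly what lets the distinguished input in $Q$ fire, possibly much later than in $\pin(x).Q'$, still receiving the same $M$ on $c$.

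Next I would set up a simulation invariant between $\rho$ and the execution of $Q$ I am constructing: the two configurations agree on $\Eps$ and $\Frame$, and their process multisets coincide after replacing, in the $Q$-configuration, some of the not-yet-stepped copies of $P\substi{M}{x}$ by $\pin(x).P$. Everything above the distinguished position is identical in $Q$ and $Q'$, so every structural or activation step of $\rho$ (\textsc{par}, \textsc{repl}, \textsc{new}, or a step of the common context, including unfolding a replication sitting above the distinguished subprocess) is mirrored verbatim in $Q$, preserving the invariant; this in particular handles replication, where the single front input of $\pin(x).Q'$ is shared by all copies, whereas in $Q$ each activated copy consumes its own input of the same $M$. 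The runs diverge only when a top-level copy of $P\substi{M}{x}$ takes its first own transition: there I insert one silent \textsc{in} step in $Q$ that turns the corresponding $\pin(x).P$ into $P\substi{M}{x}$ — legal by the monotonicity fact above — and then perform that same transition. Copies that $\rho$ activates but never advances need no inserted input.

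Inducting on the length of $\rho$ then yields an execution of $Q$ whose sequence of non-silent labels is exactly $F_1,\dots,F_n$, since the front input of $\pin(x).Q'$ and all inserted inputs are silent while every event-producing step is matched one-for-one once the relevant copy has performed its input; hence $tr\in\traces(Q)$. I expect the main obstacle to be the bookkeeping of this invariant under replication — precisely tracking the runtime copies of the distinguished subprocess and arguing that deferring each copy's input until just before its first own step is always sound — together with the routine but necessary freshness argument that keeps $M$ and $c$ derivable at every point where an input must fire.
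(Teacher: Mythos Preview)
Your proposal is correct and follows essentially the same approach as the paper's proof: the key observation is that the term bound to $x$ in $\pin(x).Q'$ must be derivable from the empty frame, and since no semantic rule removes information from the frame, the same term can be supplied to the embedded input in $Q$ whenever that input becomes active. Your treatment is considerably more detailed than the paper's---in particular, you make explicit the simulation invariant and the handling of replication (where multiple runtime copies of the distinguished subprocess each need their own delayed input of the same $M$), points the paper's five-sentence argument leaves implicit.
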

\revised{
Both lemmas are proven in Appendix~\ref{appendlemma}.
In the second step, we ensure that these transformations have been
correctly applied, i.e., that
$\traces(\calF(G)) \subseteq \traces(P)$
follows from Lemma~\ref{con:replin}~and~\ref{lem:in}.
In the third step, we verify the three
syntactic conditions on the planning task.
Finally, we use ProVerif 
to show
\revised{Condition~\autoref{asu:postsplit}}.
We partition  \calA by postcondition. For this condition to hold, each
element must have the following form.\revised{
\[
    \mathcal{A}_i =
    \{ (\pre_i^1, \{\post_i\}),(\pre_{i}^2, \{ \post_i\}),...,(\pre_i^k,
\{ \post_i \})\}.
\]}
We translate each element into a correspondence query: for each trace, if $\post_i$ occurs 
then 
$\pre_i^1$,
$\pre_i^2$ or any other 
$\pre_i^j$, $j \in \{1,\cdots,k\}$, occurs as well.
As Condition~\autoref{asu:prefixclosed} holds for any
ProVerif process, we
obtain symbolic soundness by Theorem~\ref{the:symsound}.
}

\subsection{ProVerif Model Introduction}
\label{sec:proverif}

Using the two lemmas above, we can transform \emph{all} $F(G)$ into the
process $P$, whose structure we present
in 
Figure~\ref{pshort}.
The full model is in \appendixorfull{sec:full-model}.
\begin{figure}[ht!]
\begin{lstlisting}[mathescape=true,breaklines=true,numbers=none,language=Proverif,basicstyle=\footnotesize]
!(in(c,prov:provider);
  !(in( dom_c:dom); in( ip_c:ip); in( AS_c:as);
    !($P_\smtpclient$(prov,dom_c,ip_c,AS_c)))
| !(in( dom_s:dom); in( ip_s:ip); in( AS_s:as);
    !($P_\smtpserver$(prov,dom_s,ip_s,AS_s)))
| !(in( dom_r:dom); in( ip_r:ip); in( AS_r:as);
    !($P_\dnsres$(dom_r,ip_r,AS_r)))
| !(in( dom_d:dom); in( ip_d:ip); in( AS_d:as);
    !($P_\dnsns$(dom_d,ip_d,AS_d)))
| !(in( dom_rn:dom); in( ip_rn:ip); in( AS_rn:as);
    !($P_\dnsrns$(dom_rn,ip_rn,AS_rn))))
\end{lstlisting}
\caption{Simplified ProVerif model ($\pin(m)$ short for
$\pin(c,m)$).}
\label{pshort}
\end{figure}

In this model, the adversary is also able to choose which processes
communicate, and thus controls the underlying network topology.
%
Given the process model based on the graph $G$ in Figure~\ref{paraprocess} and our ProVerif model
$P$, we show that the transformations have been applied correctly.
\begin{theorem}
\label{the:main}
$\forall G.
\traces(\mathcal{F}(G))
\subseteq
\traces(P) 
$.
\end{theorem}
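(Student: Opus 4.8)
The plan is to obtain $P$ from $\calF(G)$ by a finite chain of over-approximation steps, each justified by one of the two lemmas and each only enlarging the trace set, so that transitivity of $\subseteq$ yields the inclusion uniformly in $G$. The structural observation that makes this possible is that $\calF(G)$ is assembled exclusively from applied-$\pi$ constructs together with the meta-operation $\bigPar_{s\in S}P(s)$, and that $G$ enters only through the finite index sets $S$ (with their routing and $\orig$ side-conditions). Consequently, eliminating every $\bigPar$ in favour of a replicated, adversary-driven input erases all dependence on $G$ and lands us exactly at the $G$-independent skeleton of Figure~\ref{pshort}.

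Concretely, I would work top-down. First I apply Lemma~\ref{con:replin} to the outermost composition $\bigPar_{v\in V_\prov}(\cdots)$, replacing it by $!(\pin(c,\prov);(\cdots))$. Since all nodes are public names, every $v\in V_\prov$ is among the names the adversary may feed into $\prov$, so the resulting trace set already contains $\traces(\calF(G))$. Inside the body, each of the five role components ($P_\smtpclient$, $P_\smtpserver$, $P_\dnsres$, $P_\dnsns$, $P_\dnsrns$) is itself a $\bigPar$ ranging over tuples of graph nodes subject to side-conditions; applying Lemma~\ref{con:replin} to each in turn replaces it by a replicated input reading the relevant parameters from the adversary. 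Dropping the graph side-conditions is sound precisely because the adversary-supplied names form a superset of the constraint-satisfying tuples, so each replacement again only grows the trace set.

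Two reshaping steps remain. The inputs produced by Lemma~\ref{con:replin} read whole parameter vectors at the head of each component, whereas $P$ reads the components individually and, for the server- and resolver-side vectors, reads them \emph{inside} the role processes rather than passing them as arguments (matching the network-channel structure of the full model, which accounts for the reduced arities of $P_\smtpclient$ etc.\ in Figure~\ref{pshort}). I would realise this relocation with Lemma~\ref{lem:in}, which permits pushing an input deeper into a process at the cost of only enlarging the trace set, applied repeatedly until each input sits where the full model places it; splitting a single tuple input into a sequence of component inputs $\pin(\mathit{dom});\pin(\mathit{ip});\pin(\mathit{AS})$ is likewise sound because the adversary controls the supplied values in either case. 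Collecting the inclusions along the chain then gives $\traces(\calF(G))\subseteq\traces(P)$.

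The main obstacle is not any individual lemma application but that both lemmas are stated for a process in isolation, while here they must act on subprocesses embedded in a larger context — under $\bigPar$, under replication, and behind input prefixes. The proof therefore hinges on a monotonicity/congruence property of $\traces$: whenever $\traces(Q_1)\subseteq\traces(Q_2)$, then $\traces(C[Q_1])\subseteq\traces(C[Q_2])$ for every context $C$ built from parallel composition, replication and input prefixing. Establishing this congruence for the operational semantics of Figure~\ref{semantics} — in particular verifying that the additional interleavings and the adversary's extra choices introduced by each step never delete a trace — is the real work; once it is in hand, the five role-by-role applications of Lemma~\ref{con:replin} and the input relocations via Lemma~\ref{lem:in} compose routinely to the claim for arbitrary $G$.
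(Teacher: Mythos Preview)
Your proposal is correct and follows the same overall scheme as the paper --- eliminate every $\bigPar$ via Lemma~\ref{con:replin} and then relocate inputs with Lemma~\ref{lem:in} --- but your decomposition differs in one instructive way.

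You proceed top-down: apply Lemma~\ref{con:replin} to the outer $\bigPar_{v\in V_\prov}$, then again to each of the five inner $\bigPar$ blocks. As you rightly observe, this forces you to invoke the lemma on subprocesses sitting under replication and input prefixes, which in turn requires the congruence property you flag as ``the real work''. The paper avoids precisely this obstacle: it first uses associativity and commutativity of parallel composition to flatten all nested $\bigPar$-quantifiers into a \emph{single} top-level $\bigPar_{\vecin{p}} \Pproto\substi{\vecin{p}}{\vecv}$, where $\Pproto$ is the fivefold parallel composition of the role processes and $\vecin{p}$ ranges over all joint assignments. Lemma~\ref{con:replin} is then applied exactly once, at the outermost level, yielding $!\pin(\vecv).\Pproto$ with no appeal to congruence. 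Only the subsequent exhaustive use of Lemma~\ref{lem:in} --- pushing the components of $\vecv$ inward to match Figure~\ref{pshort} --- still operates under the replication, but that lemma is already stated for an arbitrary enclosing context $Q$, so it applies more directly. Your route is sound (the congruence you need does hold), but the paper's flattening step is worth noting because it dissolves the very obstacle you single out as the crux.
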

\revised{
Lemmas~\ref{con:replin}~and~\ref{lem:in}
reduce the proof of this theorem to a structural argument 
(see Appendix~\ref{appendlemma}).
The syntactic conditions on the planning task can be verified 
by inspecting it (Appendix~\ref{app:threat}). 
\revised{Condition~\autoref{asu:postsing}},
discussed in Sec.~\ref{sec:defsymsound},
holds as there is no negated postconditions.
\revised{Condition~\autoref{asu:postcomp}} holds, as all events in
$\Inter$ occur as a postcondition of some rule.
\revised{Condition~\autoref{asu:presubset}} holds, as all
preconditions are in
$\Inter$. 
}

\subsection{Automated verification}

It remains to show \revised{Condition~\autoref{asu:postsplit}}, which
we verify using ProVerif.
The full set of queries is specified in~\appendixorfull{queries}. 
We grouped these queries according to whether the postcondition 
expresses a loss of integrity 
or 
a loss of confidentiality.
We express the first property as a correspondence property
and the second as a reachability property known as \emph{weak secrecy}.
%
%
For the first kind, we verify that any event matching some pattern $e$
was preceded by an event matching a pattern $e'$.  
Any trace with an event matching $e$ but not $e'$ could be mapped to
one where such integrity violation events are specifically marked;
these would be the actual events in $\Inter$.
For the second kind, weak secrecy is expressed as usual.
The attacker can demonstrate the ability to correctly input a secret message (in this case, the
content of an email) in a subprocess. Upon success, the subprocess
can be reduced to an event. We analyze the reachability of this event.

During the modeling process, we found two bugs in the IA model.
%
%
First, in the IA model,
$\compr(\ip)$ implies $\compr(d)$ if $d$ resolves to $\ip$,
but not vice versa.
As $\compr(ip)$ does not represent IP-level attacks, but a compromise
of the service identified by $\ip$, this ought to be the case.
Without this rule, all rules that concern routing, name resolution or
application compromise break down, as they identify the service with
the domain it runs on.
Luckily, this does not invalidate \citeauthor{speicher2018formally}'s
result, as an inconsistency between the corruption of an IP and
a domain can only come from (a) missing or inconsistent information in
the property graph, e.g. domains $d_1$ and $d_2$ linked to different
countries but resolving to the same IP, or (b) from an inconsistent
initial network attacker state. 
We confirmed with the authors that neither condition was met.

The second bug concerns the DNSSEC protocol. DNSSEC 
requires resolver-side signature validation. This is not always the
case for
resolvers run by ISPs, but a realistic future scenario to investigate. 
By contrast,
local resolvers (e.g. on clients or services like mail) 
rely on the ISP's validation during (recursive) resolving and will, at
least for the near future, 
not
validate signatures themselves.
\nameref{dnsrouteres}, 
however, assumes that this is the case, i.e. 
that DNSSEC is an effective countermeasure against
domain poisoning attacks mounted 
between the local (recursive, usually non-validating) resolver
and the ISP's (iterative, validating) resolver. 
Presumably, this is a bug, or at least an unrealistic
assumption.

To solve the first problem, we added a rule turning $\compr(d)$ into $\compr(\ip)$
and use $\compr(\ip)$ in all other rules, instead of $\compr(d)$.
To solve the second problem, we altered the rule by deleting the
$\dnssec$ predicate from the precondition. The changes are highlighted
in Appendix~\ref{app:threat}.

%
ProVerif proves all queries automatically and thus 
the last condition,~\autoref{asu:postsplit}.
We used ProVerif version 2.02pl1. On an Intel i7-9750H CPU with 16 GB
RAM, the analysis took 9.92s.
This concludes our proof for the symbolic soundness.

\subsection{Modeling Challenges}\label{discuss}

We take the opportunity to discuss some modeling challenges that we
encountered and that are specific to our methodology.

The first is the modeling of the infrastructure attacker, who is less
powerful than ProVerif's standard network attacker. It can only
observe communication if the corresponding route has been compromised.
Our first approach used \textit{private channels}
to model non-corrupted transfer of messages. We noticed ProVerif
running into termination problems during the resolution. We minimized
our model to three parties and found the issue 
to be ProVerif's internal representation of private channels as Horn clauses.
This is because private channels are synchronous, as opposed to free (public)
channels.%
\footnote{
In addition, \citeauthor{DBLP:journals/jcs/BabelCK20} point out various
communication semantics.}
Routing in the Internet is actually asynchronous, so we model secret
channels using 2-ary fact symbols \reqchannel, \anschannel and the following
reduction:
\[
\getreqpacket(x,\reqchannel(x,y))=y.
\]
(The reductions for \anschannel are analogous).
All parties apply the function symbol with a shared key in the first
parameter, to represent communication on that channel.
The keys are built over names representing the IP addresses
of the communicating parties, as well as a freshly chosen
source port (sender) and the publicly known target port.
To corrupt a key, the adversary claims the entity as part of
its domain.
Additionally, the adversary may choose some AS under its control
and claims it to be part of the IP route between the communicating parties.
With the corruption of this AS, the route is also seen as corrupted and
the adversary can claim the key. This in-transit AS corruption model
is very similar to the threat model described in the IA model.

The second challenge is how to structure the process such that
information about corruption at the routing level is transmitted to
processes that represent the resolution or application layer.
As an example, imagine the adversary compromising an AS. All service
providers affected would need to be informed that they can now output
their keys. First attempts with private channels 
lead to non-termination. Instead, we restructured the
process so that an AS compromise is a subprocess of the lower
layer. Each entity needs to be compromised separately, but raises the
same event $\compr(\as)$.

The third challenge is the size of the model. Using ProVerif's
pretty printing, the process counts 360 lines, which is unusually
large. The queries alone take about 47 lines. 
The most recent ProVerif release 2.02pl1 improved the verification
time from 4 minutes (with 2.01) to about ten seconds. Hence we do not see a reason why
the model could not be extended to \revised{cover \citeauthor{speicher2018formally} 's} complete model at a reasonable level
of abstraction.
Nevertheless, a full-blown model of TLS could bring ProVerif to
its limits.
We suspect the model size is the reason why resolution takes
unusually long --- typically, ProVerif's analysis takes seconds or
does not terminate at all.
Disabling either the DNSSEC or DNS
processes supports our suspicion that,
the model size has
a strong impact on the verification time, 
even though the models are
relatively simple.
A potential remedy is techniques for vertical and parallel
composition (e.g. \cite{escobar2010sequential},
\cite{cheval2017secure},
\cite{hess2018stateful}), which
could potentially be used to derive conditions for
the composition of IA models.


\section{Conclusion}
\label{sec:con}

We introduced the first formal approach to justify vulnerability analysis and risk assessment techniques that operate on an Internet-wide scale. We provided a formal methodology to analyze a given model with off-the-shelf verifiers and demonstrated the applicability of our approach for symbolic soundness w.r.t.\ a real-world IA model. The protocol transformations and modeling tricks to represent infrastructure attackers in the Dolev-Yao model might be of independent interest for protocol analysis, e.g. for the analysis of p2p protocols.

We identify two main limitations: first, the verification of symbolic completeness requires either true\footnote{See discussion in Sec.~\ref{sec:asusymcomp}.} \revised{support for liveness} in existing verifiers, or syntactical conditions that ensure that liveness can be concluded from reachability properties. We speculate that the reason for the lack of support is less in the technical challenges they pose, but the lack of a use case. Processes are expected to specify how a `good state' can be reached. 

The second limitation is the size of the model. We are confident that a holistic analysis of multiple protocols acting in parallel can be conducted for the whole of \citeauthor{speicher2018formally}'s model, but what if we want to include a full-grown model of different versions of TLS and IPsec? A deeper exploration of protocol composition in light of the infrastructure attacker and IP-like communication may yield a refined verification methodology and perhaps even composition results for IA models.

\AtNextBibliography{\small}
\printbibliography

\appendix
\subsection{ProVerif queries}
\label{appendix}
\lstset{
  language={Proverif},
  basicstyle=\scriptsize\ttfamily, 
  columns=fixed, 
  keepspaces=true, 
  showstringspaces=false, 
  breaklines=true, 
  numbers=none
}
\begin{full}
\begin{figure}
\centering
\begin{lstlisting}[label=queries,breaklines=true,caption=Queries in ProVerif]
query m:provider, n:provider, m':dom, n':dom, 
e:ip, d:dom, g:ip, r:ip, i:ip, j:ip;
event(Unconf(m,n))  
    	==>   (event(isMailserver(m',m))
               && event(A_record(i,m'))
               && event(C_ip(i)))
          ||  (event(isMailserver(n',n))
               && event(A_record(i,n'))
               && event(C_ip(i)))
          ||  (event(isMailserver(m',m))
               && event(A_record(i,m')) 
               && event(Received(n,d,r))
               &&(
                  (event(queries_prov(i,n))
                  && event(Resolver(i,g))
                  && event(C_ip(g)))
               ||(event(queries_prov(i,n))
                  && event(Resolver(i,g))
                  && event(UsedDomServer(g,e))
                  && event(C_ip(e)))
               ||(event(queries_prov(i,n))
                  && event(Resolver(i,g))
                  && event(C_routing(i,g)))
               ||(event(queries_prov(i,n))
                  && event(Resolver(i,g))
                  && event(UsedDomServer(g,e))
                  && event(C_routing(g,e))
                  && event(nDNSSEC(n))))
                  )
          ||  (event(isMailserver(m',m))
               && event(A_record(i,m'))
               && event(queries_prov(i,n))
               && event(Received(n,d,j))
               && event(C_routing(i,j))). 


query x:provider, d:dom, m:ip, e:ip, f:ip, 
g:ip;
event(Received(x,d,m)) 
    	  ==> (event(Register_MX(x,d)) 
               && event(Register_A(d,m)))
         ||   (event(queries_prov(f,x))
               && event(C_ip(f)))
         ||   (event(queries_prov(f,x))
               && event(Resolver(f,g))
               && event(C_ip(g)))
         ||   (event(queries_prov(f,x))
               && event(Resolver(f,g))
               && event(C_routing(f,g)))
         ||   (event(queries_prov(f,x))
               && event(Resolver(f,g))
               && event(UsedDomServer(g,e))
               && event(C_routing(g,e))
               && event(nDNSSEC(x)))
         ||   (event(queries_prov(f,x))
               && event(Resolver(f,g))
               && event(UsedDomServer(g,e))
               && event(C_ip(e))).
\end{lstlisting}
\end{figure}

\end{full}
\section{Background: Complete threat model}
\label{app:threat}

In this section, we will present the complete threat model described in
\cite{speicher2018formally}. Hence, the following will be
freely, but completely cited from \cite{speicher2018formally}, except
for the modifications marked in \newstuff{bold orange}.

We will give each rule, followed by the intuition of what kind of
attack it represents.

\subsubsection{Initially Compromised Nodes}

\begin{myrule}[ $r_\mathit{init-loc}$ in \cite{speicher2018formally} ]
All autonomous systems, IPs and domains associated to the attacking country
are initially under control of the attacker.
\[
\infer{\compr(x)}{x \in \AS \cup \addr \cup \domain & cn \in \countr & x \xrightarrow{\loc} n & \compr(cn)}
\]
\end{myrule}

\begin{myrule}[$r_\mathit{init-as}$\cite{speicher2018formally}]
If an AS is under the control of the attacker,
any IP which is part of the AS is also under control of the attacker.
\[
\infer{\compr(i)}{i \in \addr ~ a \in \AS ~ i \xrightarrow{\orig} a ~ \compr(a)}
\]
\end{myrule}

\begin{myrule}[$r_\mathit{init-dom}$\cite{speicher2018formally}]
If an IP is under the control of the attacker,
any domain that resolves to it (even if the attacker cannot interfere
with the resolution) is also under the control of the attacker.
\[
\infer{\compr(d)}{d \in \domain ~ i \in \addr ~ d \xrightarrow{\A} i ~ \compr(i)}
\]
\end{myrule}

\begin{myrule}
    [$r_\mathit{init-ip}$ \newstuff{(this rule is new)}]
If a domain is under the control of the attacker,
any IP it resolves to (even if the attacker cannot interfere
with the resolution) is also under the control of the attacker.
\[
\infer{\compr(i)}{d \in \domain ~ i \in \addr ~ d \xrightarrow{\A} i ~ \compr(d)}
\]
\end{myrule}

\subsubsection{Attacks via Routing}

\begin{myrule}
[ $r_\mathit{injection}$\cite{speicher2018formally} ]
If the attacker controls an AS which transfers packets from a domain $m$
to some IP address belonging to $n$
and this particular connection is not secured via the VPN mitigations,
we assume that the integrity of the communication from $m \in \domain$
to $n \in \domain$ is compromised.
\[
\infer{\intrrd(\newstuff{i, j})}{\substack{%
\rem{$d, e \in \domain$} \quad i, j \in \addr \quad a, b, c \in \AS
\quad \compr(b) \quad \vpn(a,c)
\\
\rem{$d \xrightarrow{\A}$} i \quad \rem{$e \xrightarrow{\A} j$} \quad i \xrightarrow{\orig}
a \quad j \xrightarrow{\orig} c \quad
a \xrightarrow{\RTE(b)} c
\\
}}
\]
\end{myrule}

On the resolution and application level we are only concerned with
communication between domains. Thus this rule covers all
relevant routing attacks.

\subsubsection{Integrity of domain/MX resolution}

\begin{myrule}
[ $r_\mathit{dns-ns}$\cite{speicher2018formally} ]
If the attacker controls any name server that could be queried during
resolution,
we consider the integrity of the domain name resolution compromised.
\[
\infer{\intrh(d)}{d, e \in \domain & d \xrightarrow{\DNS} e & \newstuff{$e \xrightarrow{\A} i$} & \compr(\newstuff{$i$})}
\]
\begin{full}
Although unspecified by RFC~3207,
name servers commonly attach  an \A record whenever they respond to
a resolution request with an \DNS entry pointing to another name
server~\cite{rfc3207}. This speeds up the resolution and has the pleasant
side-effect that the integrity of the resolution does not depend on
whether these authoritative
the integrity of the resolution of the domain names of the name servers requested,
hence the transitive rule for $\intrh$ is not at the attacker's
disposal.
\end{full}
\end{myrule}

\begin{myrule}
[ $r_\mathit{dns-res}$\cite{speicher2018formally} ]
If the attacker controls the resolver of a domain,
we consider the integrity of any domain name resolution this domain
attempts compromised.
(Technically, $r$ is an IP address, but we simplified this and the following
rule for presentation.)
\[
\infer{\intrd(d, e)}{d, e \in \domain & i \in \addr & d \xrightarrow{\RES}
i & \compr(i)}
\]
\end{myrule}

\begin{myrule}
[$r_\mathit{dns-route-res}$\cite{speicher2018formally} ]\label{dnsrouteres}
If the attacker controls the route from a domain to the resolver this
domain uses,
we consider the integrity of any domain name resolution this domain
attempts compromised \rem{,
unless the integrity of the resolution is guaranteed by DNSSEC}.
\[
\infer{\intrd(d, e)}{d, e  \in \domain & \newstuff{$i \in \addr$}
& d \xrightarrow{\RES} r & \newstuff{$d \xrightarrow{\A} i$}
& \intrrd(\newstuff{$i$}, r) & \rem{\dnssec(e)}}
\]
\end{myrule}

\begin{myrule}
[ $r_\mathit{dns-route-ns}$\cite{speicher2018formally} ]
If the attacker controls the route from a resolver to some authoritative name server
potentially queried during resolution,
we consider the integrity of the resolution for this domain name
compromised,
unless the integrity of the resolution is guaranteed by DNSSEC.
\[
\infer{\intrd(d, e)}{\substack{d, e, f  \in \domain \quad \newstuff{$r \in \addr$} \quad d \xrightarrow{\RES} r
\quad e \xrightarrow{\DNS} f \\ \newstuff{$f \xrightarrow{\A} i$} \quad \intrrd(r, \newstuff{i}) \quad \dnssec(e)}}
\]
\end{myrule}

\subsubsection{Confidentiality}

\begin{myrule}
[ $r_\mathit{compromise}$\cite{speicher2018formally} ]
If a mail server is already compromised, e.g.,
if it is hosted by an adversarial country,
the attacker can compromise the confidentiality of the
communication between two mail providers.

\[
\infer{\unconf(d, e)}{\substack{
d, e \in \provider
\quad d \xrightarrow{\MX} d' \quad e \xrightarrow{\MX} e' \\
\newstuff{$d' \xrightarrow{\A} d''$} \quad \newstuff{$e' \xrightarrow{\A} e''$}
\quad \compr(\newstuff{$e'$}') \lor
\compr(\newstuff{$d''$})}
}
\]
\end{myrule}

\begin{myrule}
[ $r_\mathit{fake-mx}$\cite{speicher2018formally} ]
If the sender does not enforce strict host validation,
e.g., by using optimistic STARTTLS,
the attacker can compromise the confidentiality of the
communication between two mail providers
by changing a provider's MX record to point to a domain under her
control.

\[
    \infer{\unconf(d, e)}{\substack{
d, e \in \provider \quad
d\neq e \quad
d \xrightarrow{\MX} d'
\\
\tls^\sender(d) \quad
\intrh(e) \lor \intrd(d',e)
}}
\]
\end{myrule}

\begin{myrule}
[ $r_\mathit{fake-ip}$\cite{speicher2018formally} ]
If the sender does not enforce strict host validation,
\begin{full}
e.g., by using
optimistic STARTTLS,
\end{full}
the attacker can compromise the confidentiality of the
communication between two mail providers
by pointing the domain of the MX to an
IP of her choice.
\[
\infer{\unconf(d, e)}{\substack{
d, e \in \provider \quad
d\neq e \quad
d \xrightarrow{\MX} d' \quad e \xrightarrow{\MX} e' \\
\intrh(e') \lor \intrd(d', e') \quad \tls^\sender(d)}}
\]
\end{myrule}

\begin{myrule}
[ $r_\mathit{intercept}$\cite{speicher2018formally} ]
If the sender does not enforce strict host validation,
\begin{full}
e.g., she is using optimistic STARTTLS,
and DANE is not deployed,
\end{full}
the attacker can compromise the confidentiality of the
communication between two mail providers
by
intercepting packets on the
route between their respective mail servers.
\[
\infer{\unconf(d, e)}{\substack{
d, e \in \provider \quad
d\neq e \quad
d \xrightarrow{\MX} d' \quad e \xrightarrow{\MX} e' \\
\newstuff{$d' \xrightarrow{\A} d''$} \quad \newstuff{$e' \xrightarrow{\A} e''$} \quad
\intrrd(\newstuff{$d'', e''$}) \quad \tls^\sender(d)
\quad \dane^\receiver(e)
}}
\]
\end{myrule}

\begin{myrule}
[ $r_\mathit{fake-mx-strict}$\cite{speicher2018formally} ]
If the sender does not enforce certificate validation according to
RFC 7817, e.g., by using
optimistic STARTTLS or strict validation on the hostname only,
the attacker can compromise the confidentiality of the
communication between two mail providers
by changing a provider's MX record to point to a domain under her
control.
\[
\infer{\unconf(d, e)}{\substack{
d, e \in \provider \quad
d\neq e \quad
d \xrightarrow{\MX} d' \\
\intrh(e) \lor \intrd(d', e) \quad \strictValidation(d)}}
\]
\end{myrule}

\subsection{Lemmas}
\label{appendlemma}

Before proving Theorem~\ref{the:main}, we present the proofs to the 
two
process transformations from Section~\ref{justify} .

\begin{proof}[Proof of Lemma \ref{lem:in}]
We can show that for every trace of the process $in(v).Q'$, the same trace can be produced by $Q$. In the first process, the adversary has to choose what term it binds to the free variable $v$ in the beginning. Therefore, it needs to deduce $T$ from the frame, s.t. $\frameabr \vdash t$. Comparing to $Q$, we can deduct the same term $T$ by replacing $in(v).Q'$ with $Q$ in the same configuration. Since no rule of our operational semantics deletes any information from the frame, we are able to deduce $T$ at any point during the process execution of $Q$. This allows us to substitute $v$ with $T$ in both processes, leading to the same set of traces (since the rest of the processes is the same by construction.)
\end{proof}

%
%

\begin{proof}
To proof Lemma \ref{con:replin}, we start by applying (\textup{repl}) (see Figure \ref{semantics}) $\abs{\overrightarrow{\textit{PN}}}$ times
to the process $!in(\vecv).P)$ and get a new process  $Q = !in(\vecv).P) \underbrace{\mid in(\vecv).P) \mid ...\mid in(\vecv).P)}_{\abs{\overrightarrow{\textit{PN}}}}$.
The variables $\vecv$ in the right process of Lemma \ref{con:replin} are substituted by public names provided by $G$. We apply the rule (\textup{in}) also $\abs{\overrightarrow{\textit{PN}}}$  times on $Q$ and the adversary can input the same public names as provided by $G$ since all public names are deducible from the frame. With this transformation of $Q$ we get exactly $!in(\vecv).P) \mid \bigPar\limits_{\vecin{p}}P\set{\substi{\vecin{p}}{\vecv}}$.
Hence, we can conclude that
\begin{align*}
traces(!in(\vecv).P)\ &= traces(!in(\vecv).P) \mid \bigPar\limits_{\vecin{p}}P\set{\substi{\vecin{p}}{\vecv}}) \\
&\supseteq\ traces(\bigPar\limits_{\vecin{p}}P\set{\substi{\vecin{p}}{\vecv}})
\end{align*}
\end{proof}

\begin{proof}[Proof of Theorem \ref{the:main}]
    \newcommand{\Pproto}{P_\mathname{proto}}
First, we rearrange the $||$-quantification 
in  $\mathcal{F}(G)$ 
(see Figure~\ref{paraprocess}),
so that
$\vecin{p}$ consists of all assignments to the
meta-variables\footnote{%
The $||$-notation is a syntactic shortcut on the mathematical
level, hence the variables it binds are mathematical variables,
not ProVerif variables. They 
stand for nodes in the graph, which we assumed to be public names.}
$x$, $y$ and $z$%
.
(By definition, $||$ is associative and commutative.)
For the reader's convenience, we index the applied-$\pi$ variables
with
the meta-variable they replace.
\begin{align*}
\traces(\mathcal{F}(G)) & = \traces
\left(
    \bigPar\limits_{\vecin{p}} \Pproto \substi{\vecin{p}}{\vecin{v}}
\right),
\intertext{where 
$\Pproto=
    !P_\smtpclient(\vecin{x}^\client,\vecin{x}^\server,\vecin{x}^\RES)
\mid
!P_\smtpserver\allowbreak(\vecin{y}^\server,\vecin{y}^\client)
\mid
!P_\dnsres(\vecin{z}^\RES, \vecin{z}^\client, \vecin{z}^\DNS, \vecin{z}^\rns)
\mid
!P_\dnsns(\vecin{u}^\DNS,\allowbreak \vecin{u}^\RES)
\mid
!P_\dnsrns(\vecin{w}^\rns, \vecin{w}^\RES)
) 
$ (compare with Figure~\ref{paraprocess}).
Therefore, by applying Lemma~\ref{con:replin} we get:
}
  & \subseteq \traces\left(
      !\pin(\vecv).\Pproto
      \right)
\intertext{
Note that all variables are uniquely named.
We can hence exhaustively apply Lemma~\ref{lem:in}
to $P$ to push all variables in $\vecv$ to the inside far enough that 
the resulting process  
      $\Pproto'$
      equals $P$ (compare with full model in 
      \appendixorfull{sec:full-model}.). 
      Verifying the syntactical equivalence, we obtain:
}
  & \subseteq \traces(P).
\end{align*}
\end{proof}

\begin{full}
\subsection{Full model}\label{sec:full-model}
\lstset{
  language={Proverif},
  basicstyle=\small\ttfamily, 
  columns=fixed, 
  keepspaces=true, 
  showstringspaces=false, 
  breaklines=true, 
  commentstyle= \color{lightgray},
  morecomment = [l]{//},
  morecomment = [n][\color{orange}]{(*}{*)},
}
\onecolumn
\lstinputlisting[label=Fullmodel,caption=Full ProVerif model]{symsoundmodel.pv}
\end{full}

\end{document}